\theoremstyle{remark}
\newtheorem{definition}{Definition}
\newtheorem{example}{Example}
\newtheorem{theorem}{Theorem}[section]
\newtheorem{lemma}{Lemma}[section]
\newtheorem{remark}{Remark}
\newtheorem{assumption}{Assumption}
\newcommand{\R}{\ensuremath{\mathbb{R}}}
\def\0{\mathbf{0}}
\def\1{\mathds{1}}
\def\bbf{\mathbf{b}}
\def\Abf{\mathbf{A}}
\def\Bbf{\mathbf{B}}
\def\Dbf{\mathbf{D}}
\def\e{\mathbf{e}}
\def\ep{\boldsymbol{\epsilon}}
\def\Ep{\mathbb{E}}
\def\Gbf{\mathbf{G}}
\def\I{\mathbf{I}}
\def\Lbf{\mathbf{L}}
\def\Mbf{\mathbf{M}}
\def\N{\mathcal{N}}
\def\Pbf{\mathbf{P}}
\def\Qbf{\mathbf{Q}}
\def\Rbf{\mathbf{R}}
\def\Rcal{\mathcal{R}}
\def\rbf{\mathbf{r}}
\def\Sbf{\mathbf{S}}
\def\Ubf{\mathbf{U}}
\def\u{\mathbf{u}}
\def\vv{\mathbf{v}}
\def\Vbf{\mathbf{V}}
\def\w{\mathbf{w}}
\def\Wbf{\mathbf{W}}
\def\y{\mathbf{y}}
\def\x{\mathbf{x}}
\def\Xbf{\mathbf{X}}
\def\z{\mathbf{z}}
\def\Zbf{\mathbf{Z}}
\def\one{{\bf 1}}
\def\zero{{\bf 0}}
\def\trace{\text{trace}}
\newcommand{\twonorm}[1]{\left\| #1 \right\|}
\newcommand{\mat}[1]{\text{mat}\left( #1 \right)}
\newcommand{\vect}[1]{\text{vec}\left( #1 \right)}
\newcommand{\wt}[1]{\widetilde{#1}}
\begin{document}

\title{Coded Iterative Computing using Substitute Decoding}

\author{%
   \IEEEauthorblockN{Yaoqing Yang, Malhar Chaudhari, Pulkit Grover, Soummya Kar}
\thanks{Some parts of this paper will be presented at the 2018 IEEE International Symposium on Information Theory (ISIT) \cite{yang2018coding}. This work is supported by NSF ECCS-1343324, NSF CCF-1350314 (NSF CAREER) for Pulkit Grover, NSF ECCS-1306128, NSF CCF-1513936, the Bertucci Graduate Fellowship for Yaoqing Yang, and by Systems on Nanoscale Information fabriCs (SONIC), one of the six SRC STARnet Centers, sponsored by MARCO and DARPA.

Y. Yang, M. Chaudhari, P. Grover and S. Kar are with the Department of Electrical and Computer Engineering, Carnegie Mellon University, Pittsburgh, PA, 15213, USA. Email: \{yyaoqing, mschaudh, pgrover, soummyak\}@andrew.cmu.edu}}

\maketitle

\begin{abstract}
In this paper, we propose a new coded computing technique called ``substitute decoding'' for general iterative distributed computation tasks. In the first part of the paper, we use PageRank as a simple example to show that substitute decoding can make the computation of power iterations solving PageRank on sparse matrices robust to erasures in distributed systems. For these sparse matrices, codes with dense generator matrices can significantly increase storage costs and codes with low-density generator matrices (LDGM) are preferred. Surprisingly, we show through both theoretical analysis and simulations that when substitute decoding is used, coded iterative computing with extremely low-density codes (2 or 3 non-zeros in each row of the generator matrix) can achieve almost the same convergence rate as noiseless techniques, despite the poor error-correction ability of LDGM codes. In the second part of the paper, we discuss applications of substitute decoding beyond solving linear systems and PageRank. These applications include (1) computing eigenvectors, (2) computing the truncated singular value decomposition (SVD), and (3) gradient descent. These examples show that the substitute decoding algorithm is useful in a wide range of applications.
\end{abstract}

\section{INTRODUCTION}

Iterative computation is one of the most fundamental tools in statistics and numerical linear algebra. Optimization problems can often be solved using iterative methods such as gradient descent and its extensions. Spectral analysis on large and sparse matrices, such as computing eigenvectors and the singular value decomposition, can also be conducted in an iterative way \cite{rutishauser1969computational,stewart1976simultaneous,golub2012matrix,berry2006parallel,halko2011finding,kempe2008decentralized,ji2016apache}. Another typical example of iterative computing is the power-iteration method that repeatedly multiplies the intermediate result with a matrix until convergence, which is useful in a variety of applications such as PageRank \cite{page1999pagerank,haveliwala2002topic}, semi-supervised learning \cite{zhu2003semi} and clustering \cite{lin2010power}.

In this work, we utilize error correcting codes to make distributed iterative computing robust to system noise such as stragglers and erasures. Researchers from the coded computing field have investigated a wide range of applications, including matrix multiplications, distributed frameworks, and large-scale machine learning algorithms \cite{lee2016speeding,tandon2016gradient,dutta2016short,ferdinand2016anytime,li2018fundamental,yu2017polynomial,karakus2017straggler,jeongcoded,fahim2017optimal,bitar2017minimizing,attia2018near,aktas2017straggler,woolsey2018new,li2015coded,duttaISIT2017,YaoqingISTC,wang2018fundamental,maityrobust}. Our previous work \cite{yang2017coded} considered coded computing for power iterations when many inverse problem instances of the form $\Mbf\x_i=\bbf_i,i=1,2,\ldots,m$ are computed in parallel. However, a more common setting is the computation of a \emph{single} inverse problem $\Mbf\x=\bbf$ when the linear system matrix $\Mbf$ is large and sparse and thus cannot fit in the memory of a single machine, and thus distributed computing is necessary. Existing results in coded computing typically use dense generator matrices, such as those of MDS codes, to ensure good error-correcting capability. However, dense encoding of the sparse matrix $\Mbf$ can significantly increase the number of non-zero entries, and hence can increase communication, storage and computational costs. In fact, dense encoding using MDS codes makes the sparse problem completely non-sparse. In this work, we use codes with sparse generator matrices instead (low-density generator matrices, or LDGM). However, despite the bad error-correcting ability of LDGM codes, we show that LDGM codes can be made surprisingly efficient in maintaining the convergence rate of iterative computing. The key is to use a novel decoding algorithm that we call ``substitute decoding''.

The ``substitute decoding'' method relies on the fact that the intermediate result in the iterative computing $\x_{t+1}=f(\x_t)$ gradually converges to the fixed point or the optimum point, so $\x_{t+1}$ and $\x_t$ gradually become close to each other when $t$ increases. Using this property, the substitute decoding method works by extracting the largest amount of available information from partial coded results in the computation of $\x_{t+1}=f(\x_t)$, and substituting the complementary unknown information by the available side information $\x_t$ from the previous step. In other words, instead of computing the exact result of $\x_{t+1}=f(\x_t)$, we compute a combined version $\x_{t+1}=\text{Proj}_1 [f(\x_t)]+\text{Proj}_2 [\x_t]$, where $\text{Proj}_1$ represents the projection onto the space of available information from partial decoding, and $\text{Proj}_2$ represents the orthogonal projection of $\text{Proj}_1$. This is useful in coded computing with LDGM codes because even if the exact result is not available, we can obtain a partial result and use the side information $\x_t$ to compensate the information loss. As we show in our main theorems (Theorem~\ref{thm:converge} and Theorem \ref{thm:converge_col}), substitute decoding can reduce error by a multiplicative factor $\delta$ that drops to 0 when the partial generator matrix is close to full-rank. More specifically, $\delta$ is linear in the rank of the partial generator matrix formed by the linear combinations from non-erased workers. The convergence rate of noiseless computation can be achieved when $\delta$ is small. This property is essential in using LDGM codes for coding sparse data because the partial generator matrix is often not full-rank but close to full-rank, which makes $\text{Proj}_1$ approximately equal to the identity projection, and makes $\x_{t+1}=\text{Proj}_1 [f(\x_t)]+\text{Proj}_2 [\x_t]$ close to $f(\x_t)$. As we show in our simulations (see Section~\ref{sec:simulation_all}), even for sparse generator matrices with only 2 non-zero entries in each row, coded iterative computing works significantly better than replication-based or uncoded iterative computing in convergence rate when the results from a constant fraction of workers are erased. When there are 3 non-zeros in each row, the convergence rate of noiseless iterative computing (the information limit) can be approximately achieved by coded computing.

In the first part of the paper (see Section~\ref{sec:II} to Section~\ref{sec:decoding}), we use PageRank (see Section~\ref{sec:PageRank}) as a simple example of iterative computing and introduce the substitute decoding method for different types of data splitting. We use the power-iteration method to compute PageRank which essentially computes the principal eigenvector of a sparse matrix. In the second part of the paper, we show that substitute decoding can be applied to more iterative computing problems beyond PageRank and linear systems. For example, we show that substitute decoding can make the orthogonal-iteration method \cite{rutishauser1969computational,stewart1976simultaneous,golub2012matrix,berry2006parallel,halko2011finding,kempe2008decentralized,ji2016apache,ankur2018rateless} robust to erasure-type failures, which can be applied to computing more than one eigenvectors (see Section~\ref{sec:spectral_clustering}) and the truncated singular value decomposition (see Section~\ref{sec:pca}). We also show that substitute decoding can be applied to the computation of gradient descent with or without sparse data (see Section~\ref{sec:gdcoding}) and can improve on existing techniques when an extremely sparse encoding matrix is used. In Section \ref{sec:simulation_all}, we compare substitute-decoding-based methods against other baseline methods for the same communication cost. In Section \ref{sec:proof_all}, we provide the proofs for all the theorems.

Now, we summarize the contributions of this work.
\begin{itemize}
\item We design the substitute decoding method and show it can make coded iterative computing using LDGM codes approximately achieve the convergence rate of noiseless computation, in both theory and simulation.
\item We apply substitute decoding to many applications, such as large-scale eigendecomposition and singular-value-decomposition that have applications in graph spectral clustering, principal component analysis on sparse matrices and anomaly detection.
\end{itemize}

\subsection{Related works}

A closely related line of works is coding for gradient-descent-type algorithms \cite{karakus2017straggler,tandon2016gradient,karakus2017encoded, raviv2017gradient,charles2017approximate,ye2018communication,halbawi2017improving}. For gradient coding on nonlinear gradient functions, a sparse code is necessary because the storage overhead and the computational cost are directly related to the number of non-zeros in the code. Also, since in real applications the data are often sparse, the coding matrix needs to be sparse as well. For example, the necessity of using sparse codes for sparse data and the idea of storing uncoded relevant sparse data at each worker as specified by the sparse encoding matrix first appear in \cite{karakus2017straggler}. Comparing to coding for gradient descent, our substitute decoding method focuses on a more general framework of iterative computing beyond gradient descent. For example, power iterations and the more general Jacobi iterations and orthogonal iterations (see Section \ref{sec:spectral_clustering}) are not gradient-descent-type methods and the coding techniques used in this paper are also different. Moreover, if the exact gradient is required at each iteration, there is a tight lower bound on the number of non-zeros and the bound is linear in the number of erasures \cite{tandon2016gradient}. However, we focus on the regime where the encoding matrix is extremely sparse, which means the number of ones in each row of the encoding matrix is a constant (2 to 3) and does not increase with the number of erasures. However, we do notice interesting intellectual connections between substitute decoding and recent works on approximate gradient coding \cite{raviv2017gradient,charles2017approximate} because in these works remaining inaccuracy is also allowed in the computation of gradients, and the lower bound on the number of non-zeros in \cite{tandon2016gradient} can be relaxed. The main difference between approximate gradient coding and substitute decoding, when applied specifically to gradient descent, is that the substitute decoding method tries to recover the partial results from all workers, instead of focusing on computing only the SUM function of the results. From this angle, substitute decoding provides a way to introduce momentum into the system using partial gradients computed from previous iterations. Some other works in coded computing focus on sparsifying the encoded matrix (not the encoding matrix) for coded matrix multiplications \cite{dutta2016short,suh2017matrix}, but these schemes are generally not available for sparsifying data matrices that are already sparse, such as the graph adjacency matrix for PageRank. It is also well-known that random sparsification on graphs may reduce the computational complexity while maintaining the eigenvectors and the spectrum \cite{korada2011gossip,spielman2011graph}. Although these sparsification techniques are generally not designed for erasures on the entries of the intermediate result (which is equivalent to erasing some entire row blocks of the graph matrix), they can be used together with the substitute decoding technique to further reduce the computational complexity, provided that the same sparsification is applied to replicas of the same row block.

Some concurrent works also focus on coded computing for graph analytics \cite{prakash2018coded} and sparse matrix multiplications \cite{wang2018coded,wang2018fundamental,ankur2018rateless}. These works focus on single-shot computing instead of iterative computing. For iterative computational tasks, one has to optimize the system performance from the entire convergence process. For example, although the substitute decoding method is not optimal at each iteration (it does not even provide the complete decoding result in a single iteration), it has the desired convergence rate in the entire iterative process. We notice that the multistage coded computation described by a directed acyclic graph has been discussed in an earlier work \cite{li2016coded} under the Map-Reduce framework. The importance of the convergence-rate for iterative computing in the context of computing with stragglers has also been well-recognized by recent works \cite{dutta2018slow,karakus2017straggler}.

Another intellectually related line of works is on coding for \emph{noisy computing} in which the circuit logic units can have faults \cite{Had_TIT_05, Yaz_TC_01,Var_TIT_11, Chi_ITW_07,taylor1968reliable,Pip_FOC_85,Pip_TIT_91}. One idea in these works was that even if errors cannot be eliminated completely during each iteration, the error may not diverge if sufficient (low-complexity) decoding is implemented after each iteration. This idea of iterative error suppression has been applied recently in our previous works to computing linear transforms entirely out of unreliable components \cite{yang2017computing} and logistic regression with erroneous units \cite{yang2016fault}. The fundamental limits of iterative error suppression have also been studied extensively \cite{Eva_TIT_99} and are closely related to the concept of \emph{information dissipation} \cite{polyanskiy2016dissipation,yang2017rate}. Compared to these works, we further show that even if the code that we use is very sparse (LDGM), we can still manage to maintain the convergence rate in the noiseless case when \emph{sufficient errors} are corrected during each iteration. This is fundamentally different from previous works because substitute decoding relies on the advantage of iterative convergence itself to make a sparse code useful, while all the mentioned previous works use dense codes. Note that the error-correcting ability of sparse codes has been studied in control theory as well in the context of observability in the presence of malicious attacks \cite{shoukry2016event,chen2015cyber}, but when the linear system matrix is 2$s$-sparse the allowable number of failures is only $s$. In our work, the number of tolerable erasures is linear in the number of workers, even when the code is 2$s$-sparse. This is similar to the conclusion of \cite{charles2017approximate} in the study of gradient coding.

\section{Problem Formulation}\label{sec:II}
\subsection{Preliminaries on PageRank and the Power-Iteration Method}\label{sec:PageRank}

We use the PageRank problem \cite{page1999pagerank,haveliwala2002topic} as an example to introduce substitute decoding and delegate the discussion on more iterative computing tasks to Section \ref{sec:extension}. The PageRank and the more general personalized PageRank problem aim to measure the importance score of the nodes on a graph by solving the linear problem $\x=c \rbf+(1-c)\Abf\x$, where $c=0.15$ is a constant, $N$ is the number of nodes, $\Abf$ is the column-normalized\footnote{By column-normalized, we mean the columns of $\Abf$ are normalized so that each column has sum 1.} adjacency matrix and $\rbf\in \R^N$ represents the preference of different users or topics. The classical way to solve PageRank is to use the power-iteration method, which iterates $\x_{t+1}=c \rbf+(1-c)\Abf\x_t$ until convergence. If we define $\Bbf=(1-c)\Abf$ and $\y=c\rbf$. Then, we obtain the general form of power iterations:
\begin{equation}\label{eqn:power_iteration}
\x_{t+1}=\y+\Bbf\x_t.
\end{equation}
The condition that \eqref{eqn:power_iteration} converges to the true solution $\x^*$ is that the spectral radius $\rho(\Bbf)<1$. For the PageRank problem, $\rho(\Bbf)=1-c$ and \eqref{eqn:power_iteration} always converges to $\x^*$.

\begin{figure}
  \centering
  \includegraphics[scale=0.3]{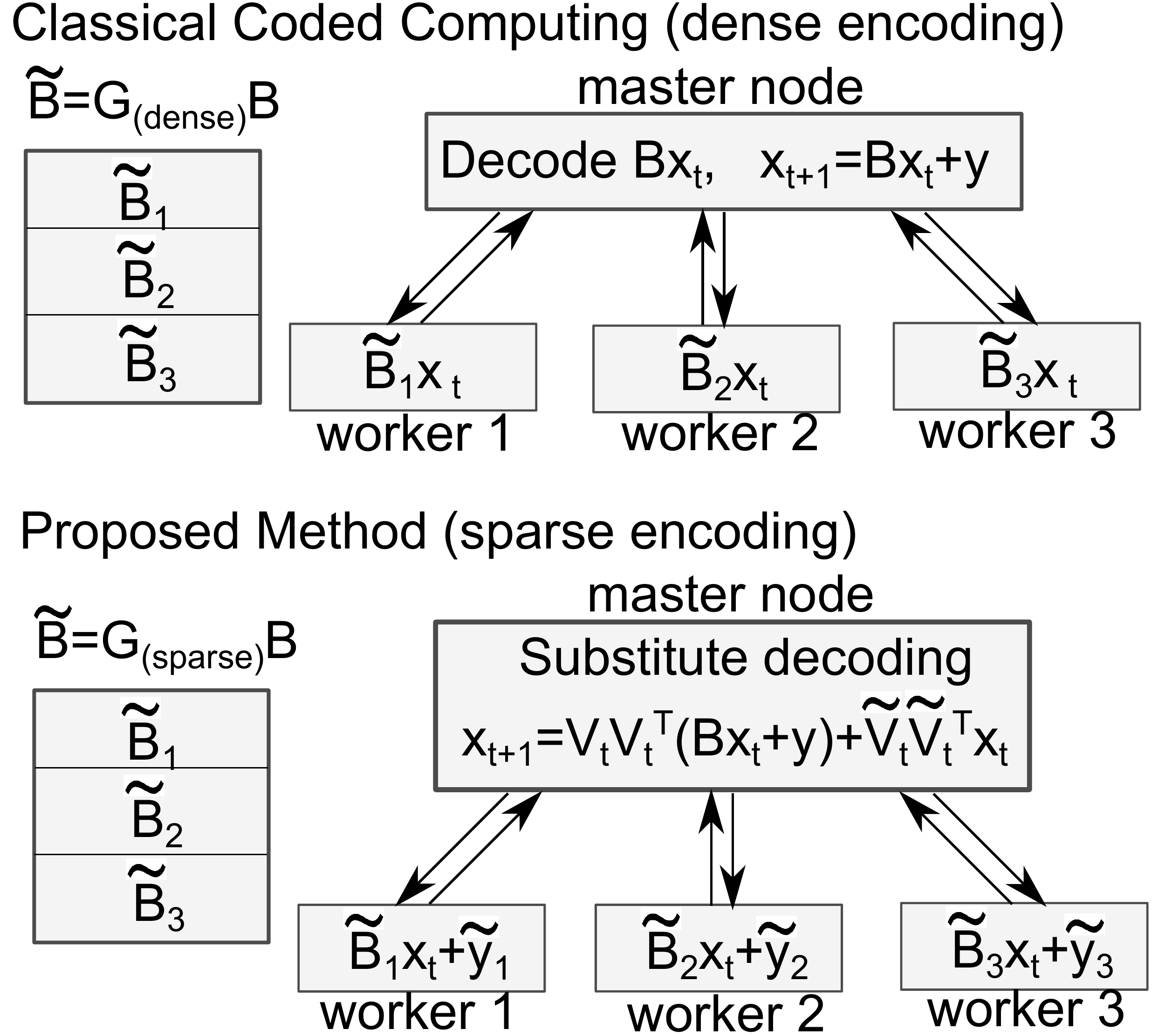} \\
  \caption{This shows the comparison between the existing works on coded computing and the proposed coded computing technique for the computation of power iterations in the row-splitting case. In the proposed method, we only show the scalar version as mentioned in Remark~\ref{rem:scalar}.} \label{fig:comparison}
\end{figure}
\subsection{Noiseless Distributed Computing of Power Iterations}

When the size of the linear system matrix $\Bbf$ is too large to fit in the memory of a single machine, the computation of \eqref{eqn:power_iteration} is performed distributedly. The most straightforward way is to partition $\Bbf$ into several blocks and store them in the memory of several workers. In this section, we describe three types of data splitting, namely row-wise splitting, column-wise splitting and SUMMA splitting (i.e., both row and column).
\subsubsection{Row-wise Splitting}\label{sec:row-splitting}
We split the linear system matrix $\Bbf$ into several row blocks and store them in the memory of several workers. Denote the number of workers by $P$, and this is also the number of row blocks. At the beginning of the $t$-th iteration, a master node sends the current result $\x_t$ to all workers. Then, the worker that has the row block $\Bbf_i$ computes $\Bbf_i \x_t$ and sends it back to the master node. At the end of the iteration, the master node concatenates all the results $\Bbf_i\x_t,i=1,2,\ldots,P$ from the $P$ workers to obtain $\Bbf \x_t$, and computes $\x_{t+1}=\Bbf \x_t+\y$.
\subsubsection{Column-wise Splitting}
We split the linear system matrix $\Bbf$ into several column blocks and store them in the memory of several workers. Again denote the number of workers by $P$, and this is also the number of column blocks. At the beginning of the $t$-th iteration, a master node breaks the current result $\x_t$ into $P$ subvectors of the same length $N/P$ \footnote{When $N$ is not divisible by $P$, we can add some zero columns to the matrix $\Bbf$} and sends each subvector $\x_t^i$ to the $i$-th worker. Then, the $i$-th worker computes $\Bbf_i \x_t^i$ and sends it back to the master node. At the end of the iteration, the master node computes the sum of all the results $\Bbf_i\x_t^i,i=1,2,\ldots,P$ from the $P$ workers to obtain $\Bbf \x_t$, and computes $\x_{t+1}=\Bbf \x_t+\y$.
\subsubsection{SUMMA Splitting}
The SUMMA matrix splitting is designed for matrix-matrix multiplications \cite{van1997summa} to reduce the communication complexity, but it can be applied to matrix-vector multiplications as well. In SUMMA splitting, we split the linear system matrix $\Bbf$ both row-wise and column-wise into $\sqrt{P}\times \sqrt{P}$ blocks $\Bbf_{ij},i=1,\ldots,\sqrt{P},j=1,\ldots,\sqrt{P}$ and store them in the memory of $P$ workers. At the beginning of the $t$-th iteration, a master node breaks the current result $\x_t$ into $\sqrt{P}$ subvectors of the same length $N/\sqrt{P}$ and sends each subvector $\x_t^j$ to all the workers that have blocks $\Bbf_{ij}$, where $i=1,2,\ldots \sqrt{P}$ and $j$ is fixed. Then, the $(i,j)$-th worker computes $\Bbf_{ij} \x_t^j$ and sends it back to the master node. After that, the master node computes the sum of all the results $\Bbf_{ij} \x_t^j,j=1,2,\ldots,\sqrt{P}$ for a fixed $i$ to obtain a subvector with index $i$, and concatenates all the subvectors for $i=1,2,\ldots,\sqrt{P}$ to obtain $\Bbf \x_t$. Finally, the master node computes $\x_{t+1}=\Bbf \x_t+\y$.

Since the data matrix $\Bbf$ can have different types of splitting methods, the coded computing technique should be able to apply to different splitting situations as well. In Section \ref{sec:coding_preliminaries2}, we show that our coded computing technique can indeed be applied to different matrix splitting situations.

\subsection{Preliminaries and Notation on Coded Computing}\label{sec:coding_preliminaries1}

Error correcting codes can help make distributed computing robust to stragglers and erasures. We first present the direct application of coded computing to the power iteration \eqref{eqn:power_iteration} with row-wise splitting, and point out a drawback of it. As shown in the upper part of Fig.~\ref{fig:comparison}, if the common idea of coded computing is applied to row-wise splitting, $\Bbf_{N\times N}$ is partitioned into $k$ row blocks and linearly combined into $P>k$ row blocks $\wt\Bbf_i,i=1,2,\ldots,P$ using a $(P,k)$ linear code with a generator matrix $\Gbf$ of size $P\times k$. Each encoded block is stored at one of the $P$ workers. For simplicity, we assume that $N$ is divisible by $k$, and denote the number of rows in each row block by $b=\frac{N}{k}$. Then, encoding can be written as
\begin{equation}\label{eqn:B_encoding}
\wt\Bbf=(\Gbf_{P\times k}\otimes \I_b)\Bbf,
\end{equation}
where the Kronecker product is because we encode row blocks.
\begin{example}\label{ex:1}
We show an example of coded computing \cite{lee2016speeding}. This example will be mentioned many times throughout the paper. Suppose $\Bbf$ is partitioned into $\Bbf_{N\times N}=\left[\begin{matrix}
\Bbf_1\\
\Bbf_2
\end{matrix}\right]$ and encoded into $\wt\Bbf_{\frac{3}{2}N\times N}=\left[\begin{matrix}
\Bbf_1\\
\Bbf_2\\
\Bbf_1+\Bbf_2
\end{matrix}\right]$. The generator matrix is $\Gbf_{3\times 2}=\left[\begin{matrix}
1&0\\
0&1\\
1&1
\end{matrix}\right]$. The number of row blocks in $\Bbf$ is $k=2$. The number of workers is $P=3$ (which is also the code length). The number of rows in each row block is $b=\frac{N}{2}$. The encoding can indeed by written as \eqref{eqn:B_encoding}, because
\[\left[\begin{matrix}
\Bbf_1\\
\Bbf_2\\
\Bbf_1+\Bbf_2
\end{matrix}\right]=\left[\begin{matrix}
\I_b & \\
 & \I_b\\
\I_b &\I_b
\end{matrix}\right]\left[\begin{matrix}
\Bbf_1\\
\Bbf_2
\end{matrix}\right]=\left(\left[\begin{matrix}
1&0\\
0&1\\
1&1
\end{matrix}\right]\otimes \I_b \right) \left[\begin{matrix}
\Bbf_1\\
\Bbf_2
\end{matrix}\right].
\]
\end{example}

At each iteration, the $i$-th worker computes $\wt\Bbf_i \x_t$ and the master node concatenates all the results to obtain $\wt\Bbf \x_t$, which is a coded version of $\Bbf\x_t$.

Now, we define two operations (shown in Fig.~\ref{fig:vec_mat}). Denote by $\vv=\vect{\Xbf}$ the operation to vectorize the matrix $\Xbf$ into the concatenation of its transposed rows, and denote by $\Xbf=\mat{\vv}$ the operation to partition the column vector $\vv$ into small vectors and stack the transposed small vectors into the rows of $\Xbf$. We always partition the vector $\vv$ into smaller ones of length $b=\frac{N}{k}$ which represents the row-block size at each worker. Then, it is straightforward to show that any operation of the form $\x=(\Abf\otimes \I_b)\cdot \vv$ can be rewritten in a compact form $\x=\vect{\Abf\mat{\vv}}$. Therefore, from \eqref{eqn:B_encoding}, the obtained results at the master node is
\begin{equation}
\wt\Bbf\x=(\Gbf_{P\times k}\otimes \I_b)\Bbf\x=\vect{\Gbf\mat{\Bbf\x}}.
\end{equation}
The matrix-version of $\wt\Bbf\x$ is hence $\Gbf\mat{\Bbf\x}$, which means each column of the matrix-version of $\wt\Bbf\x$ is a codeword. In the presence of stragglers or erasures, the coded results $\Gbf\mat{\Bbf\x}$ would lose some of its rows, and the decoding can be done in a parallel fashion on each column. There are $b$ columns in $\Gbf\mat{\Bbf\x}$. Thus, the decoding complexity is $b\N_\text{dec}$, where $\N_\text{dec}$ is the complexity of decoding a single codeword.

\begin{figure}
  \centering
  \includegraphics[scale=0.40]{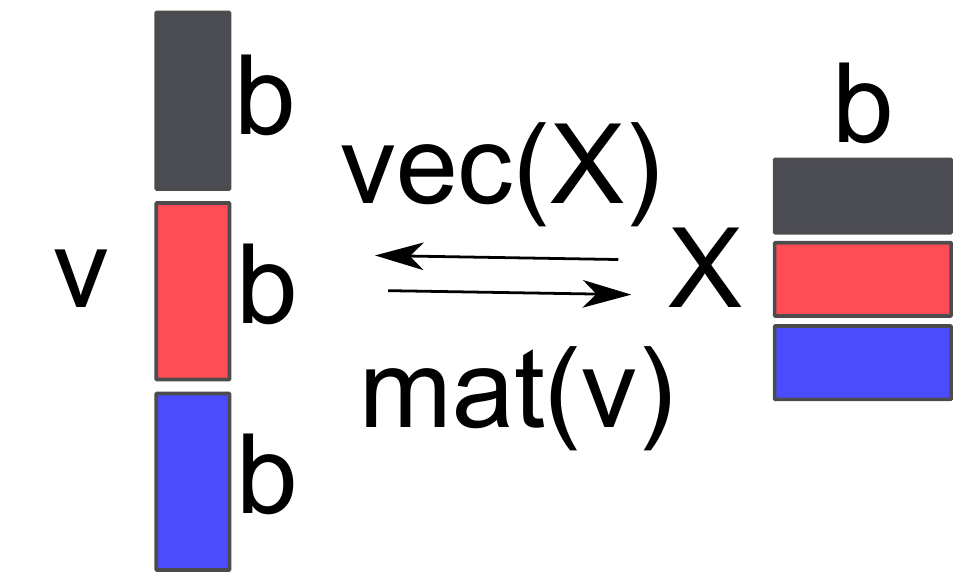}\\
  \caption{An illustration on the $\vect{\cdot}$ and the $\mat{\cdot}$ operations.} \label{fig:vec_mat}
\end{figure}

A main drawback of the above method is that the generator matrix $\Gbf$ is usually dense, such as MDS codes or random Gaussian codes. However, in practice, the system matrix $\Bbf$ is often sparse, as in the PageRank problem. Therefore, using a dense $\Gbf$ may significantly increase the number of non-zeros of the sparse linear system matrix. For example, if the generator matrix $\Gbf$ has 20 non-zeros in each row, it means that the submatrices $\wt\Bbf_i$ stored at each worker can have at most 20 times larger size than the uncoded case if the matrix $\Bbf$ is sparse. In this paper, the key question that we would like to ask is that what is the right coding algorithm to use if we have a tight constraint on the sparsity of the encoding matrix (such as only 2 or 3 non-zeros in each row of the encoding matrix), and whether any positive results can be achieved under such constraint. Notice that in some applications such as computing the gradient, if the exact gradient needs to be computed, the sparsity of the encoding matrix has to grow linearly with the number of erasures \cite{tandon2016gradient}.

In section~\ref{sec:decoding_alg}, we show that $\Gbf$ can actually be very sparse (such as two ones in each row), while the iterative computing (such as power iterations) can still remain robust to a linear number of stragglers in $P$. This result is extended in Section \ref{sec:extension} to general iterative computing problems.

\subsection{Preliminaries on the Proposed Technique}\label{sec:coding_preliminaries2}

\subsubsection{Row-wise splitting}
In our technique for row-wise splitting, similar to standard coded computing, the linear system matrix $\Bbf$ is partitioned into $k$ row blocks and encoded into $P$ row blocks using a $(P,k)$ code with rate $R=\frac{k}{P}$. Each encoded row block is stored at one worker. We now state an important difference in our code: \emph{at each iteration, we use a different generator matrix $\Gbf^{(t)}$, but its sparsity pattern remains the same across iterations}. In Example \ref{ex:1}, the generator matrix $\Gbf=\left[\begin{matrix}
1&0\\
0&1\\
1&1
\end{matrix}\right]$. So at each iteration $t$, we have a different generator matrix $\Gbf^{(t)}=\left[\begin{matrix}
g_{11}^t&0\\
0&g_{22}^t\\
g_{31}^t&g_{32}^t
\end{matrix}\right]$. We choose each non-zero $g_{ij}^t$ to be a standard Gaussian r.v., and all of these r.v.s are independent of each other. The fixed sparsity pattern $\Gbf$ determines which (sparse) row blocks of the uncoded matrix $\Bbf$ are stored at each worker. In particular, $\Bbf_j,j=1,\ldots,k$ is stored at worker-$i$, $i=1\ldots P$, when $\Gbf_{i,j}=1$. In Example \ref{ex:1}, $\Bbf_1$ is stored at worker-1 and worker-3, and $\Bbf_2$ is stored at worker-2 and worker-3. However, instead of precomputing the encoded submatrices $\wt\Bbf_i$ as in \eqref{eqn:B_encoding}, the $i$-th worker just stores its required row blocks in $\Bbf$, because the code is time-varying. Similarly, we also partition the vector $\y$ into $k$ subvectors of length $b$ and store them in the $P$ workers in the exactly same fashion as $\Bbf$. At the $t$-th iteration, worker-$i$ computes $(\Gbf^{(t)})_\text{$i$-th row}\mat{\Bbf\x_t+\y}$. In Example \ref{ex:1}, this means that worker-3 stores $\Bbf_1$ and $\Bbf_2$ at the local memory. At the $t$-th iteration, it computes $\Bbf_1\x_t+\y_1$ and $\Bbf_2\x_t+\y_2$ and encodes them to $g_{31}(t)(\Bbf_1\x_t+\y_1)+g_{32}(t)(\Bbf_2\x_t+\y_2)$ using the linear coefficients in $\Gbf^{(t)}$. Since the sparsity pattern is fixed, although the code is time-varying, stored data blocks at each worker remain the same.

At each iteration, a random fraction $\epsilon$ of the workers fail to send their results back due to either erasures (packet losses) or stragglers (the communications with slow workers are discarded to save time). Then, at the master node, available results are $\Gbf_s^{(t)}\mat{\Bbf \x_t+\y}$, where $\Gbf_s^{(t)}$ is the submatrix of $\Gbf_{s}$ formed by the linear combinations at the non-erased workers. We call $\Gbf_s^{(t)}$ a ``partial generator matrix''. In existing works on coded computing, if a dense Vandermonde-type code is used, the desired result $\Bbf \x_t+\y$ can be decoded from $\Gbf_s^{(t)}\mat{\Bbf \x_t+\y}$ if $1-\epsilon>R$, because any square submatrix $\Gbf_s^{(t)}$ of a Vandermonde matrix is invertible. However, if $\Gbf$ is extremely sparse, even if $\epsilon$ is very small, it is possible that $\Bbf \x_t+\y$ cannot be decoded because $\Gbf_s^{(t)}$ may not be invertible.

\subsubsection{Column-wise splitting}\label{sec:col_part_preliminary}

In the column-wise splitting, the matrix $\Bbf$ is partitioned into $k$ column blocks $\Bbf_j,j=1,\ldots k$. Similar to the row-wise splitting case, we do not encode these column blocks explicitly in preprocessing because we are going to use a time-varying code. Instead, we distribute these column blocks according to the (fixed) sparsity pattern matrix $\Gbf$. In particular, $\Bbf_j,j=1,\ldots,k$ is stored at worker-$i$, $i=1\ldots P$, when $\Gbf_{i,j}=1$. At the $t$-th iteration, the master node partitions the vector $\x_t$ into $k$ sub-vectors $\x_t^j$ and sends $\x_t^j$ to all workers that have the column block $\Bbf_j$ in memory. Then, the $i$-th worker computes the linear combination
\begin{equation}\label{eqn:worker_vertical}
\w_t^i=\sum_{j=1}^k g_{ij}^t \Bbf_j \x_t^j,
\end{equation}
and sends back the result. Notice that the communication overhead of sending multiple $\x_t^j$'s instead of one to each worker is larger than in the noiseless case but this overhead is small because (1) the pattern matrix $\Gbf$ is very sparse, and (2) the vectors $\x_t^j$ are of small length $N/k$ compared to the vector $\Bbf_j\x_t^j$ of length $N$, so the communication cost is dominated by the communication from the workers to the master node. If no erasures occur, the master node receives
\begin{equation}\label{eqn:results_from_workers}
\Wbf_t:=[\w_t^1,\w_t^2,\ldots,\w_t^P]=[\Bbf_1\x_t^1,\Bbf_2\x_t^2,\ldots,\Bbf_k\x_t^k]\cdot (\Gbf^{(t)})^\top,
\end{equation}
and decodes $\Bbf\x_t = \sum_{j=1}^k\Bbf_j\x_t^j$ from these column vectors by inverting the matrix $\Gbf^{(t)}$. It then updates $\x_{t+1}$ based on the estimate of $\Bbf\x_t$. However, since the matrix $\Gbf$ is sparse, when there are erasures, the submatrix $\Gbf_s^{(t)}$ may again be not invertible similar to the case of row-wise splitting.

\subsubsection{SUMMA-type splitting}\label{sec:SUMMA}

Although an entangled way of applying coded computing to the SUMMA splitting is possible, we only investigate in this paper a simple way of extending the row-splitting coded computing to SUMMA splitting due to the increase in the communication cost when the entangled way is used. Our way is to only apply coding to each column block of the SUMMA matrix splitting, i.e., for $\Bbf_{ij},i=1,\ldots,\sqrt{k}$ and for each fixed $j$. More specifically, we partition all the $P$ workers evenly into $\sqrt{k}$ groups and partition $\x_t$ into $\sqrt{k}$ subvectors $\x_t^j$. Then, the $j$-th group is only responsible for computing $\w_t^j=\left[\begin{matrix}
  \Bbf_{1j}\x_t^j\\
  \Bbf_{2j}\x_t^j\\
  \vdots\\
  \Bbf_{\sqrt{k}j}\x_t^j\\
\end{matrix}\right]$. Since the workers in the $j$-th group is computing a matrix-vector multiplication with row-wise matrix splitting, one can directly apply the coded computing in the row-wise splitting case to compute an estimate of $\w_t^j$ using a $(P/\sqrt{k},\sqrt{k})$ sparse code. After that, the master node computes an estimate of $\Bbf\x_t = \sum_{j=1}^{\sqrt{k}}\w_t^j$ by adding the estimates of $\w_t^j,j=1,\ldots,\sqrt{k}$.

\section{Substitute Decoding for Coded Iterative Computing}\label{sec:decoding}

In Section~\ref{sec:coding_preliminaries1} and Section~\ref{sec:coding_preliminaries2}, we suggested a plausible tradeoff on the sparsity of $\Gbf$. If it is dense, storage cost is high. If it is sparse, $\Gbf_s^{(t)}$ may not be inverted to get the desired results $\Bbf \x_t+\y$. Surprisingly, we show that $\Gbf_s^{(t)}$ can actually be made sparse, while its noise-tolerance is maintained. We first examine the row-splitting case.

\subsection{Substitute Decoding Algorithm for LDGM codes for Row-wise Splitting}\label{sec:decoding_alg}

The key observation is that although $\Gbf_{s}^{(t)}$ may not have full column rank, we can get partial information of $\Bbf \x_t+\y$ from $\Gbf_s^{(t)}\mat{\Bbf \x_t+\y}$. Suppose that the SVD of $\Gbf_{s}^{(t)}$ is
\begin{equation}\label{eqn:G_svd}
(\Gbf_{s}^{(t)})_{(1-\epsilon)P\times k}=\Ubf_t\Dbf_t\Vbf_t^\top,
\end{equation}
where the matrix $\Vbf_t$ has orthonormal columns and has size $k\times \text{rank}(\Gbf_{s}^{(t)})$. By multiplying $\Lbf_t = \Dbf_t^{-1}\Ubf_t^\top $ to the partial coded results $\Gbf_s^{(t)}\mat{\Bbf \x_t+\y}$, the master node obtains
\begin{equation}\label{eqn:decoding_inverse}
\begin{split}
(\Dbf_t^{-1}\Ubf_t^\top)\Gbf_s^{(t)}\mat{\Bbf \x_t+\y}\overset{(a)}{=}\Vbf_t^\top\mat{\Bbf \x_t+\y},
\end{split}
\end{equation}
where ($a$) follows from \eqref{eqn:G_svd}. Then, the master node finds an orthonormal basis of the orthogonal complementary space of the column space of $\Vbf_t$, i.e., an orthonormal basis of $\Rcal^\bot(\Vbf_t)$ (where $\Rcal^\bot(\cdot)$ means the orthogonal complementary space), and forms the basis into a matrix
$\wt\Vbf_t$, such that the matrix $[\Vbf_t,\wt\Vbf_t]$ is an orthonormal one\footnote{If $\Gbf_{s}^{(t)}$ has full rank, $\Vbf_t$ is already a square orthonormal matrix and in this case $\wt\Vbf_t$ is the NULL matrix, because $\Rcal^\bot(\Vbf_t)$ is the trivial space $\{\zero\}$.} of size $k\times k$, i.e., $\Vbf_t$, $\wt\Vbf_t$ are orthogonal to each other, and
\begin{equation}\label{eqn:sum_vtilde_I}
\Vbf_t\Vbf_t^\top + \wt\Vbf_t\wt\Vbf_t^\top = \I_k.
\end{equation}
The master node uses $\Vbf_t^\top\mat{\Bbf \x_t+\y}$ obtained from \eqref{eqn:decoding_inverse} and the stored $\x_t$ as side information to obtain a good estimate of $\Bbf \x_t+\y$ to compute $\x_{t+1}$. In particular, $\x_{t+1}$ is
\begin{equation}\label{eqn:coded_pg}
\x_{t+1}=\vect{[\Vbf_t,\wt\Vbf_t]\cdot \left[\begin{matrix}
\Vbf_t^\top\mat{\Bbf \x_t+\y}\\
\wt\Vbf_t^\top\mat{\x_t}
\end{matrix}\right]}.
\end{equation}
This is equivalent to
\begin{equation}\label{eqn:coded_pg_2}
\x_{t+1}=\vect{\Vbf_t\Vbf_t^\top\mat{\Bbf \x_t+\y}+\wt\Vbf_t\wt\Vbf_t^\top\mat{\x_t}}.
\end{equation}
\begin{remark}\label{rem:scalar}
({\bf Intuition underlying substitute decoding}) We provide intuition by looking at a scalar-version as shown in the lower part of Fig.~\ref{fig:comparison}. In the scalar version, the vector length $b$ of each subvector of $\x_t$ satisfies $b=1$ and $\I_b=1$ and \eqref{eqn:coded_pg_2} becomes
\begin{equation}\label{eqn:coded_pg_scalar}
\x_{t+1}=\Vbf_t\Vbf_t^\top(\Bbf \x_t+\y)+\wt\Vbf_t\wt\Vbf_t^\top\x_t.
\end{equation}
Since $\Vbf_t$ and $\wt\Vbf_t$ have orthogonal columns and they are orthogonal to each other, $\Vbf_t\Vbf_t^\top$ and $\wt\Vbf_t\wt\Vbf_t^\top$ are two projection matrices onto the column spaces of $\Vbf_t$ and $\wt\Vbf_t$ respectively. Since $\Vbf_t$ is obtained from the SVD of $\Gbf_{s}^{(t)}$ (see equation \eqref{eqn:G_svd}), the projection $\Vbf_t\Vbf_t^\top$ is the projection to the row space of $\Gbf_{s}^{(t)}$. The intuition of substitute decoding is that even if we cannot get the exact result of $\Bbf \x_t+\y$ by inverting the sparse $\Gbf_s^{(t)}$, we can at least obtain the projection of $\Bbf \x_t+\y$ onto the row space of $\Gbf_{s}^{(t)}$. Then, for the remaining unknown part of $\Bbf \x_t+\y$, i.e., the projection of $\Bbf \x_t+\y$ onto the right null space of $\Gbf_{s}^{(t)}$, we use the projection $\wt\Vbf_t\wt\Vbf_t^\top\x_t$ of the side information $\x_t$ to substitute. This intuition is illustrated in Fig.~\ref{fig:sub_decoding}. We give an outline of the substitute decoding algorithm for the row-splitting case in Algorithm~\ref{alg:pg}.
\end{remark}
\begin{figure}
  \centering
  \includegraphics[scale=0.3]{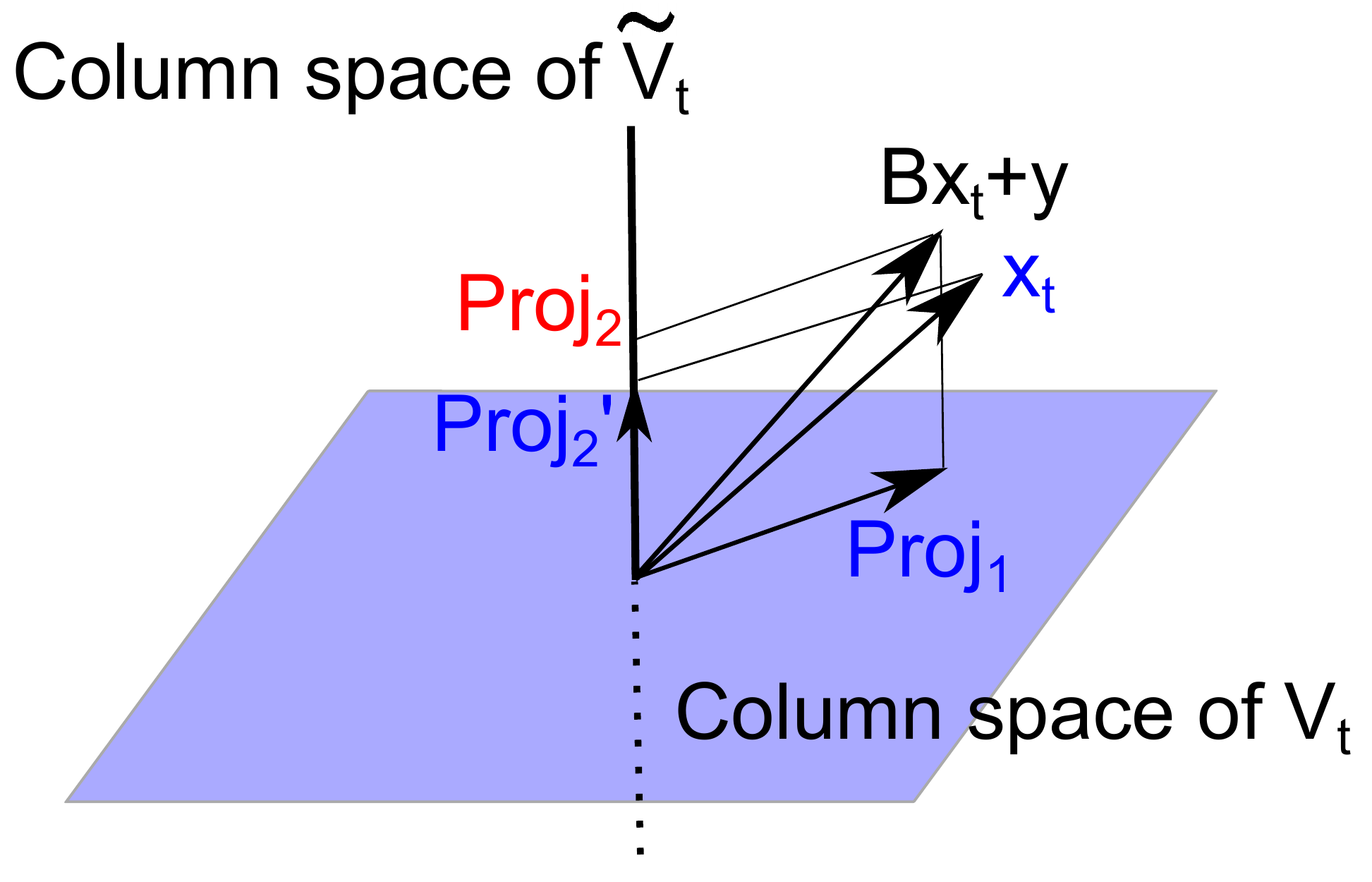}\\
  \caption{This is an illustration of substitute decoding where the known parts are colored \textcolor{blue}{blue} and the unknown parts are colored \textcolor{red}{red}. From $\Gbf_s^{(t)}(\Bbf\x_t+\y)$, we can get the projection of $\Bbf\x_t+\y$ onto the column space of $\Vbf_t$ (see Proj$_1$). For the unknown part Proj$_2$, we use the projection of $\x_t$ instead, which is Proj$_2$'.}\label{fig:sub_decoding}
\end{figure}

\begin{algorithm}[!h]
   \caption{Coded Power Iterations for Row-wise Splitting}\label{alg:pg}
\begin{algorithmic}
	\STATE {\bfseries Input:} Input $\y$, matrix $\Bbf$ and sparsity pattern $\Gbf$.
    \STATE {\bfseries Preprocessing:} Partition $\Bbf$ into row blocks and $\y$ into subvectors and store them distributedly as specified by the sparsity pattern matrix $\Gbf$. Generate a series of random generator matrices $\Gbf^{(t)},t=1,2,\ldots,T$.

   \STATE {\bfseries Master Node:} Send out $\x_t$ at each iteration and receive partial coded results. Compute $\x_{t+1}$ using substitute decoding \eqref{eqn:coded_pg_2}, where $\Vbf$, $\wt\Vbf$ are obtained from SVD \eqref{eqn:G_svd} and $\mat{\Bbf \x_t+\y}$ is computed using \eqref{eqn:decoding_inverse}.
	\STATE {\bfseries Workers:} Worker-$i$ computes $(\Gbf^{(t)})_\text{$i$-th row}\mat{\Bbf\x+\y}$.

\STATE {\bfseries Output:} The master node outputs $\x_T$.
\end{algorithmic}
\end{algorithm}

\subsection{Substitute Decoding for Column-wise Splitting}\label{sec:vertical}

In the section, we show how substitute decoding can be applied to the column-partition case described in Section~\ref{sec:col_part_preliminary}. From~\eqref{eqn:results_from_workers}, the results obtained from all the $P$ workers are
\begin{equation}\label{eqn:coded_results}
\Wbf_t=[\w_t^1,\w_t^2,\ldots,\w_t^P]=[\Bbf_1\x_t^1,\Bbf_2\x_t^2,\ldots,\Bbf_k\x_t^k]\cdot (\Gbf^{(t)})^\top.
\end{equation}
When the coded results from some workers are erased, the results at the mast node are $[\Bbf_1\x_t^1,\Bbf_2\x_t^2,\ldots,\Bbf_k\x_t^k]\cdot (\Gbf_s^{(t)})^\top$. Similar to the row-wise case, we compute the SVD of $\Gbf_s^{(t)}=\Ubf_t\Dbf_t\Vbf_t^\top$ and right-multiply the partial results by a decoding matrix $\Ubf_t\Dbf_t^{-1}\Vbf_t^\top$. Then, we obtain the projection $[\Bbf_1\x_t^1,\Bbf_2\x_t^2,\ldots,\Bbf_k\x_t^k]\Vbf_t\Vbf_t^\top$. Define $\u_t^j=\Bbf_j\x_t^j,j=1,\ldots,k$. The master node always maintains an estimate of $\u_t^j$ which we denote by $\widehat{\u}_t^j$. Then, the substitute decoding step is defined by
\begin{equation}\label{eqn:substitute_decoding_vertical}
\begin{split}
[\widehat{\u}_t^1,\ldots,\widehat{\u}_t^k]=&[\Bbf_1\x_t^1,\Bbf_2\x_t^2,\ldots,\Bbf_k\x_t^k]\Vbf_t\Vbf_t^\top\\
&+[\widehat{\u}_{t-1}^1,\ldots,\widehat{\u}_{t-1}^k]\wt\Vbf_t\wt\Vbf_t^\top.
\end{split}
\end{equation}
Finally, the master node uses $\sum_{j=1}^k \widehat{\u}_t^j$ as the estimate of $\Bbf\x_t$ and updates $\x_{t+1}$ as follows
\begin{equation}\label{eqn:x_update_vertical}
\x_{t+1} = [\widehat{\u}_t^1,\ldots,\widehat{\u}_t^k]\cdot \one_k + \y.
\end{equation}

A potential drawback of the decoding step \eqref{eqn:substitute_decoding_vertical} when compared to the decoding step \eqref{eqn:coded_pg_2} in the row-wise splitting case is that the decoding complexity is higher due to matrix-matrix multiplication on (a submatrix of) the $N\times P$ coded partial results matrix $\Wbf_t$. One way to reduce the time cost is to compute
\begin{equation}\label{eqn:substitute_decoding_vertical_real}
\begin{split}
  \x_{t+1} =& [\widehat{\u}_t^1,\ldots,\widehat{\u}_t^k]\cdot \one_k + \y\\
  \overset{(a)}{=}& [\Bbf_1\x_t^1,\Bbf_2\x_t^2,\ldots,\Bbf_k\x_t^k](\Vbf_t\Vbf_t^\top\one_k)\\
&+[\widehat{\u}_{t-1}^1,\ldots,\widehat{\u}_{t-1}^k](\wt\Vbf_t\wt\Vbf_t^\top\one_k)+\y\\
  \overset{(b)}{=}& [\w_t^1,\w_t^2,\ldots,\w_t^P](\Ubf_t\Dbf_t^{-1}\Vbf_t^\top\one_k)\\
&+[\widehat{\u}_{t-1}^1,\ldots,\widehat{\u}_{t-1}^k](\wt\Vbf_t\wt\Vbf_t^\top\one_k)+\y,
\end{split}
\end{equation}
where step (a) is from \eqref{eqn:substitute_decoding_vertical} and step (b) is from right-multiplying $\Ubf\Dbf_t^{-1}\Vbf_t^\top$ to \eqref{eqn:coded_results}. Here, computing $\Ubf_t\Dbf_t^{-1}\Vbf_t^\top\one_k$ and $\wt\Vbf_t\wt\Vbf_t^\top\one_k$ have complexity $O(Pk)$, and the remaining part is still matrix-vector multiplication which has complexity $O(PN)$, instead of the matrix-matrix multiplication in \eqref{eqn:substitute_decoding_vertical} which has complexity $O(PkN)$. After computing $\x_{t+1}$, the master node starts transmitting $\x_{t+1}$ to the workers. In the meantime, the master node starts a new thread to compute \eqref{eqn:substitute_decoding_vertical} and updates the estimates $[\widehat{\u}_t^1,\ldots,\widehat{\u}_t^k]$. We provide an outline of substitute decoding for the column-wise splitting case in Algorithm~\ref{alg:pg_col}.

\begin{algorithm}[!h]
   \caption{Coded Power Iterations for Column-wise Splitting}\label{alg:pg_col}
\begin{algorithmic}
	\STATE {\bfseries Input:} Input $\y$, matrix $\Bbf$ and sparsity pattern $\Gbf$.
    \STATE {\bfseries Preprocessing:} Partition $\Bbf$ into column blocks and store them distributedly as specified by the sparsity pattern $\Gbf$. Generate a series of random generator matrices $\Gbf^{(t)},t=1,2,\ldots,T$.

   \STATE {\bfseries Master Node:} At each iteration, partition $\x_t$ into $k$ subvectors $\x_t^j,j=1,\ldots,k$ and transmits $\x_t^j$ to each worker $i$ such that $\Gbf_{ij}=1$. Then, receive part of all the coded results $[\w_t^1,\w_t^2,\ldots,\w_t^P]$. Conduct substitute decoding \eqref{eqn:substitute_decoding_vertical_real} and get $\x_{t+1}$. While transmitting $\x_{t+1}$ to the other workers, update the estimates $[\widehat{\u}_t^1,\ldots,\widehat{\u}_t^k]$ using \eqref{eqn:substitute_decoding_vertical}.
	\STATE {\bfseries Workers:} The $i$-th worker computes $\w_t^i$ in \eqref{eqn:worker_vertical}.

\STATE {\bfseries Output:} The master node outputs $\x_T$.
\end{algorithmic}
\end{algorithm}

\subsection{Substitute Decoding for SUMMA-type splitting}\label{sec:SUMMA_decoding}

As we have discussed in Section~\ref{sec:SUMMA}, we extend the substitute decoding algorithm in the case of row-wise splitting to SUMMA splitting. The way is to apply separate substitute decoding to the coded computation of $\w_t^j=\left[\begin{matrix}
  \Bbf_{1j}\x_t^j\\
  \Bbf_{2j}\x_t^j\\
  \vdots\\
  \Bbf_{\sqrt{k}j}\x_t^j\\
\end{matrix}\right]$ in each group of $P/\sqrt{k}$ workers for all $\sqrt{k}$ groups. Therefore, the algorithm of substitute decoding in this case is almost exactly the same as in the row-splitting case. The only difference is that when the master node waits for $k$ out of $P$ workers to finish the computation, the number of workers that responds in each group may be different from each other, because the number of erasures in each group may not be the same. Therefore, it would be useful to have a decoding algorithm that can work flexibly in terms of the number of workers that responds. As we have shown in Section~\ref{sec:decoding_alg} and Section~\ref{sec:vertical}, the substitute decoding algorithm indeed has this advantage: it does not need to specify the number of workers that responds in order to decode. Therefore, we can easily transplant the row-wise splitting case to SUMMA splitting without concerning about different number of responses in different groups.

\subsection{Cost Analysis of Substitute Decoding}\label{sec:complexity}
\subsubsection{Cost Analysis in Algorithm~\ref{alg:pg}}

First, let us analyze the communication cost. For each worker, the communication cost at each iteration comes from the transmission of $\x_t$ of length $N$ and the transmission of the result of length $b=N/k$. So the total number of communicated floating point numbers is $N(1+1/k)$ which is linear in $N$.

For the computation cost at the master node, the SVD has complexity $O(kP^2)$ which is negligible. The computation of \eqref{eqn:decoding_inverse} and the substitute decoding given in \eqref{eqn:coded_pg} together have complexity $O(k^2 b)=O(kN)$ which is linear in $N$. The vectorization and matricization steps have a negligible cost.

The $i$-th worker computes a sub-vector of the entire $\Bbf\x$. Denote by $E$ the number of non-zeros in $\Bbf$. Suppose the sparse generator matrix $\Gbf$ has $d$ ones in each row. Then, the complexity at each worker is $O((dE)/k)$. This complexity can be superlinear in $N$ for dense graphs, but usually, the average degree of a graph is a large constant. This means the computation cost at each worker is also linear in $N$, but with a large constant.

\subsubsection{Cost Analysis in Algorithm~\ref{alg:pg_col}}

First, we analyze the communication cost. For each worker, the communication complexity comes from receiving a constant number of $\x_t^j$ of length $N/k$ and the transmission of the result of length $N$. The number of $\x_t^j$ is $d$ where $d$ is the number of ones in each row of the matrix. So the total number of communicated floating point numbers is $N(1+d/k)$ which is linear in $N$.

For the computational cost at the master node, the SVD is like the row-wise splitting case and has negligible cost. As we have analyzed in Section~\ref{sec:vertical}, the substitute decoding step \eqref{eqn:substitute_decoding_vertical_real} has complexity $O(PN)$, and the computation of \eqref{eqn:substitute_decoding_vertical} has complexity $O(PkN)$ but can be done simultaneously with the communication step. The computational complexity at each worker is the same as in the row-splitting case and is $O((dE)/k)$.

\begin{remark}
Since the communication complexity and the computational complexity are both linear in $N$, constant factors matter in determining which cost is dominant. When the code size is small, the communication time can dominate even when the average degree of the graph is much larger than 1, because, in practice, communication time can dominate computing time even if computing is in scaling sense more expensive, simply due to the fact that communication is slower than computation in current technologies \cite{li2018fundamental,tandon2016gradient}. In our simulations, we focus on the communication cost and analyze the cost mostly from this perspective. However, we still cannot use a dense code for encoding due to increased storage cost. We also cannot fit the entire graph into the main memory of a single machine because the average degree can be a large constant.
\end{remark}

\subsection{Convergence Analysis of the Coded Power Iterations Using Substitute Decoding}

\subsubsection{Analysis of Algorithm \ref{alg:pg}}

Denote by $\x^*$ the true solution of $\x=\Bbf\x+\y$. Then,
\begin{equation}\label{eqn:true_iter}
\begin{split}
\x^* \overset{(a)}{=}& \vect{\Vbf_t\Vbf_t^\top\mat{\x^*}} + \vect{\wt\Vbf_t\wt\Vbf_t^\top\mat{\x^*}}\\
=&  \vect{\Vbf_t\Vbf_t^\top\mat{\Bbf \x^*+\y}+\wt\Vbf_t\wt\Vbf_t^\top\mat{\x^*}},
\end{split}
\end{equation}
where ($a$) holds because of \eqref{eqn:sum_vtilde_I}. Defining the remaining error as $\e_t=\x_t-\x^*$ and subtracting \eqref{eqn:true_iter} from \eqref{eqn:coded_pg_2}, we have
\begin{equation}\label{eqn:err_iter}
\e_{t+1}=\vect{\Vbf_t\Vbf_t^\top\mat{\Bbf \e_t}}+\vect{\wt\Vbf_t\wt\Vbf_t^\top\mat{\e_t}}.
\end{equation}

\begin{remark}(Why substitute decoding suppresses error) Before presenting formal proofs, we show the underlying intuition on why the substitute decoding \eqref{eqn:coded_pg} approximates the noiseless power iteration $\x_{t+1}=\Bbf\x_t+\y$ well. Again, we look at the scalar version, i.e., $\I_b=1$. In this case, \eqref{eqn:err_iter} becomes
\begin{equation}\label{eqn:err_suppress}
\e_{t+1}=\Vbf_t\Vbf_t^\top\Bbf \e_t+\wt\Vbf_t\wt\Vbf_t^\top\e_t.
\end{equation}
Ideally, we want $\e_{t+1}=\Bbf\e_t$ since $\Bbf$ is a contraction matrix (recall that $\rho(\Bbf)<1$). Due to noise, we can only realize this contraction in the column space of $\Vbf_t$, which is the first term $\Vbf_t\Vbf_t^\top\Bbf \e_t$. Although the partial generator matrix $\Gbf_s^{(t)}$ may not have full rank due to being sparse (i.e., $\text{dim}(\Rcal(\Vbf_t))<k$), $\Gbf_s^{(t)}$ can be close to full rank. This means that the column space $\Rcal(\wt\Vbf_t)$ can have low dimension. Therefore, for the second term in \eqref{eqn:err_suppress}, the projection matrix $\wt\Vbf_t\wt\Vbf_t^\top$ suppresses the larger error $\e_t$ (compared to $\Bbf \e_t$) by projecting it onto the low-dimensional space $\Rcal(\wt\Vbf_t)$. In Theorem~\ref{thm:converge}, we will show $\wt\Vbf_t\wt\Vbf_t^\top$ reduces $\Ep[\twonorm{\e_t}^2]$ by a small multiple factor that decreases to 0 linearly as $\text{rank}(\Gbf_s^{(t)})$ increases.
\end{remark}

\begin{definition}\label{def:cyclic} (Combined cyclic sparsity pattern)
The sparsity pattern matrix satisfies $\Gbf=\left[\begin{matrix}
\Sbf_1\\
\Sbf_2
\end{matrix}\right]$,
where $\Sbf_1$ and $\Sbf_2$ are both $k\times k$ square cyclic matrices with $d$ non-zeros in each row.
\end{definition}

\begin{assumption} (Random failures) \label{ass:noise}
At each iteration, a random subset of the workers fail to compute the result due to either stragglers or erasure-type errors. Failure events are independent across all iterations.
\end{assumption}

\begin{theorem}(Convergence Rate of Algorithm~\ref{alg:pg})\label{thm:converge}
If the sparsity pattern $\Gbf$ in Definition~\ref{def:cyclic} is used and Assumption~\ref{ass:noise} holds, the remaining error $\e_t=\x_t-\x^*$ of Algorithm~\ref{alg:pg} satisfies
\begin{equation}
\Ep[\twonorm{\e_{t+1}}^2]=(1-\delta_t) \Ep[\twonorm{\Bbf\e_{t}}^2]+\delta_t\Ep[\twonorm{\e_{t}}^2],
\end{equation}
where
\begin{equation}
\delta_t=1-\frac{\Ep[\text{rank}(\Gbf_s^{(t)})]}{k}.
\end{equation}
\end{theorem}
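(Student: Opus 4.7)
The plan is to start from the error recursion \eqref{eqn:err_iter} and compute its squared $\ell_2$-norm, then take expectations by exploiting distributional symmetries of the random partial generator matrix $\Gbf_s^{(t)}$. First, since $\Vbf_t$ and $\wt\Vbf_t$ have mutually orthogonal columns so that $\Vbf_t^\top \wt\Vbf_t = \zero$, the two vectors appearing on the right-hand side of \eqref{eqn:err_iter} are orthogonal in $\R^N$,
\[
\langle \vect{\Vbf_t\Vbf_t^\top \mat{\Bbf \e_t}},\; \vect{\wt\Vbf_t \wt\Vbf_t^\top \mat{\e_t}} \rangle \;=\; \trace\bigl(\mat{\Bbf\e_t}^\top \Vbf_t\Vbf_t^\top \wt\Vbf_t \wt\Vbf_t^\top \mat{\e_t}\bigr) \;=\; 0,
\]
which yields the Pythagorean split $\twonorm{\e_{t+1}}^2 = \twonorm{\Vbf_t\Vbf_t^\top \mat{\Bbf\e_t}}_F^2 + \twonorm{\wt\Vbf_t \wt\Vbf_t^\top \mat{\e_t}}_F^2$.

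Next, I would condition on $\e_t$, which depends only on iterations $1,\ldots,t-1$ and is therefore independent of the freshly drawn generator $\Gbf^{(t)}$ and erasure pattern $s^{(t)}$. Writing each squared Frobenius norm as a trace and pulling the expectation inside gives
\[
\Ep\bigl[\twonorm{\Vbf_t\Vbf_t^\top \mat{\Bbf\e_t}}_F^2 \,\big|\, \e_t\bigr] \;=\; \trace\bigl(\mat{\Bbf\e_t}^\top \Ep[\Vbf_t\Vbf_t^\top]\, \mat{\Bbf\e_t}\bigr),
\]
and analogously for the second term with $\mat{\e_t}$ and $\Ep[\wt\Vbf_t\wt\Vbf_t^\top]$. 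Because \eqref{eqn:sum_vtilde_I} yields $\Ep[\wt\Vbf_t\wt\Vbf_t^\top] = \I_k - \Ep[\Vbf_t\Vbf_t^\top]$, the theorem would reduce to the single identity $\Ep[\Vbf_t\Vbf_t^\top] = (1-\delta_t)\I_k$, after which the claim would follow by the tower rule.

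The main obstacle is this last identity, and I would attack it with two symmetries of the joint law of $\Gbf_s^{(t)}$ that together pin $\Ep[\Vbf_t\Vbf_t^\top]$ down to a scalar multiple of $\I_k$. A \emph{sign-flip} symmetry handles the off-diagonals: right-multiplying $\Gbf^{(t)}$ by any diagonal $\pm 1$ matrix $\Dbf$ negates some columns, and because the nonzero entries of $\Gbf^{(t)}$ are i.i.d.\ standard Gaussian this preserves the joint distribution. The orthogonal projector onto the row space of $\Gbf_s^{(t)}\Dbf$ is $\Dbf\Vbf_t\Vbf_t^\top\Dbf$, so taking expectations forces $\Ep[\Vbf_t\Vbf_t^\top] = \Dbf\,\Ep[\Vbf_t\Vbf_t^\top]\,\Dbf$ for every such $\Dbf$, which zeros out all off-diagonal entries. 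The \emph{combined-cyclic} structure in Definition~\ref{def:cyclic} then handles the diagonal: because both $\Sbf_1$ and $\Sbf_2$ are circulant and hence commute with the cyclic shift $\Cbf$, right-multiplying $\Gbf^{(t)}$ by $\Cbf$ relocates nonzero positions to exactly the positions obtained by the block-cyclic row permutation $\Cbf' = \text{diag}(\Cbf,\Cbf)$ of $\Gbf^{(t)}$, i.e.\ $\Gbf^{(t)}\Cbf \stackrel{d}{=} \Cbf'\Gbf^{(t)}$. Since Assumption~\ref{ass:noise} makes the erasure pattern uniform and hence invariant under the permutation $\Cbf'$, one obtains $\Gbf_s^{(t)}\Cbf \stackrel{d}{=} \Gbf_s^{(t)}$, and consequently $\Cbf^\top \Vbf_t\Vbf_t^\top \Cbf \stackrel{d}{=} \Vbf_t\Vbf_t^\top$. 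Therefore $\Ep[\Vbf_t\Vbf_t^\top]$ commutes with $\Cbf$, and a diagonal matrix commuting with the full cyclic shift must have all equal diagonal entries. The common value is fixed by the trace identity $\trace(\Ep[\Vbf_t\Vbf_t^\top]) = \Ep[\text{rank}(\Gbf_s^{(t)})]$, giving $\Ep[\Vbf_t\Vbf_t^\top] = (1-\delta_t)\I_k$ as required.
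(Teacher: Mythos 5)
Your proposal is correct and follows essentially the same route as the paper: the Pythagorean split of the error recursion, conditioning on $\e_t$ and pulling the expectation onto the projector, and the key identity $\Ep[\Vbf_t\Vbf_t^\top]=(1-\delta_t)\I_k$ established via the sign-flip symmetry (off-diagonals), the cyclic-shift symmetry from Definition~\ref{def:cyclic} (equal diagonal), and the trace/rank identity. Your phrasing of the cyclic step via $\Gbf^{(t)}\Cbf \stackrel{d}{=} \Cbf'\Gbf^{(t)}$ with a permutation-invariant erasure pattern is a slightly cleaner packaging of the paper's realization-by-realization bijection argument, but it is the same idea.
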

The proof is in Section~\ref{sec:proof}. From Theorem~\ref{thm:converge}, we can simply upper-bound $\Ep[\twonorm{\Bbf\e_{t}}^2]$ by $\twonorm{\Bbf}_2^2\Ep[\twonorm{\e_{t}}^2]$ and hence
\begin{equation}
\Ep[\twonorm{\e_{t+1}}^2]\le [(1-\delta_t)\twonorm{\Bbf}_2^2+\delta_t]\cdot\Ep[\twonorm{\e_t}^2].
\end{equation}
This means that when $\delta_t$ is close to 0, i.e., when $\Gbf_s^{(t)}$ is close to full rank, $\Ep[\twonorm{\e_{t}}^2]$ converges to $0$ with rate close to $\twonorm{\Bbf}_2^{2t}$. In Table \ref{table:ds} in Section~\ref{sec:simulation}, we show how $\delta_t$ changes with the degree $d$ in Definition~\ref{def:cyclic}. Notice that the noiseless power iterations converge with rate $(\rho(\Bbf))^{2t}$. For the PageRank problem, $\Bbf=(1-c)\Abf$ and $\Abf$ is the column-normalized adjacency matrix. We show in Lemma~\ref{lmm:norm_close} that for Erd\"os-R\'enyi model $G(N,p)$, $\text{Pr}\left(\twonorm{\Abf}>\sqrt{\frac{1+\epsilon}{1-\epsilon}}\rho(\Abf)\right)<3Ne^{-\epsilon^2Np/8}$. This means that with high probability $\twonorm{\Abf}_2\approx\rho(\Abf)$ and hence $\twonorm{\Bbf}_2\approx\rho(\Bbf)$, and the convergence rate of coded power iteration and that of noiseless power iteration are close. Here, in $G(N,p)$, it suffices for $p$ to be $\Omega(\log N/(N\epsilon^2))$ for $3Ne^{-\epsilon^2Np/8}$ to be small.

\subsubsection{Analysis of Algorithm~\ref{alg:pg_col}}

Recall that $\x^*$ is the true solution of $\x=\Bbf\x+\y$. Denote by $\x^{j*}$ the $j$-th subvector of $\x^*$ after partitioning $\x^*$ into $k$ subvectors. Suppose $\u^{j*}=\Bbf_j\x^{j*}$. Then, the intermediate estimates $[\widehat{\u}_t^1,\ldots,\widehat{\u}_t^k]$ in \eqref{eqn:substitute_decoding_vertical} can also be viewed as estimates of $\u^{j*}=\Bbf_j\x^{j*}, j=1, \ldots, k$. Define $\e_t^j=\widehat{\u}_t^j-\u^{j*}$. Notice that the optimal solution $\x^*$ satisfies
\begin{equation}\label{eqn:error_analysis_1}
\begin{split}
\x^*=\Bbf\x^*+\y
=\sum_{j=1}^k \Bbf_j\x^{j*} +\y
=\sum_{j=1}^k \u^{j*} +\y.
\end{split}
\end{equation}
Subtracting \eqref{eqn:error_analysis_1} from \eqref{eqn:x_update_vertical}, we have
\begin{equation}\label{eqn:e_sum}
  \e_{t+1} = \sum_{j=1}^k \e_t^j,
\end{equation}
where $\e_{t+1}=\x_{t+1}-\x^*$. Therefore, if we can prove the convergence of $\e_t^j$ to 0, we can simultaneously establish the convergence of $\e_{t}$, and hence prove the convergence of $\x_t$ to $\x^*$. Define
\begin{equation}
E_t=\left[\begin{matrix}
  \e_t^1\\
  \vdots\\
  \e_t^k
\end{matrix}\right].
\end{equation}
Then, the following theorem state the convergence rate of $E_t$.

\begin{theorem}(Convergence Rate of Algorithm~\ref{alg:pg_col})\label{thm:converge_col}
If the sparsity pattern $\Gbf$ in Definition~\ref{def:cyclic} is used and Assumption~\ref{ass:noise} holds, the remaining error $E_t$ of Algorithm~\ref{alg:pg_col} satisfies
\begin{equation}\label{eqn:converge_rate_col}
\Ep[\twonorm{E_{t+1}}^2]\le [(1-\delta_t)\twonorm{\Bbf}_\text{col}^2+\delta_t] \Ep[\twonorm{E_t}^2],
\end{equation}
where
\begin{equation}
\delta_t=1-\frac{\Ep[\text{rank}(\Gbf_s^{(t)})]}{k},
\end{equation}
and $\twonorm{\Bbf}_\text{col}=\sqrt{k}\max_{j}\{\twonorm{\Bbf_j}_2\}$ and $\Bbf_j$ are the column blocks of $\Bbf$.
\end{theorem}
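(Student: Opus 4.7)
The plan is to mirror the error analysis underlying Theorem~\ref{thm:converge} but in the matrix form needed for column-wise splitting. First I would collect the per-worker estimates into the matrix $M_t := [\widehat{\u}_t^1, \ldots, \widehat{\u}_t^k]$ and the target $M_t^* := [\u^{1*}, \ldots, \u^{k*}]$, so that $\twonorm{E_t}^2 = \|M_t - M_t^*\|_F^2$. Subtracting the identity $M_t^* = M_t^*(\Vbf_t\Vbf_t^\top + \wt\Vbf_t\wt\Vbf_t^\top)$, which holds by \eqref{eqn:sum_vtilde_I}, from the decoding step \eqref{eqn:substitute_decoding_vertical} yields the error recursion
\begin{equation*}
M_t - M_t^* \;=\; A_t\,\Vbf_t\Vbf_t^\top \;+\; (M_{t-1}-M_{t-1}^*)\,\wt\Vbf_t\wt\Vbf_t^\top,
\end{equation*}
where $A_t := [\Bbf_1(\x_t^1-\x^{1*}), \ldots, \Bbf_k(\x_t^k-\x^{k*})]$ is the fresh error injected at iteration~$t$.

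Second, I would exploit that $[\Vbf_t,\wt\Vbf_t]$ is a $k\times k$ orthogonal matrix: right-multiplying the recursion by it preserves the Frobenius norm and the cross-terms vanish because $\Vbf_t^\top\wt\Vbf_t = 0$, so
\begin{equation*}
\|M_t - M_t^*\|_F^2 \;=\; \|A_t\Vbf_t\|_F^2 \;+\; \|(M_{t-1}-M_{t-1}^*)\wt\Vbf_t\|_F^2.
\end{equation*}
By Assumption~\ref{ass:noise} and the independence of the Gaussian fills across iterations, both $A_t$ and $M_{t-1}-M_{t-1}^*$ are independent of $(\Vbf_t,\wt\Vbf_t)$. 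I would then condition on those quantities and invoke the same isotropy statement used in the proof of Theorem~\ref{thm:converge}: under the combined cyclic pattern of Definition~\ref{def:cyclic} with i.i.d.\ standard-Gaussian non-zeros, $\Ep[\Vbf_t\Vbf_t^\top] = (1-\delta_t)\I_k$ and correspondingly $\Ep[\wt\Vbf_t\wt\Vbf_t^\top] = \delta_t\I_k$. Taking expectations produces
\begin{equation*}
\Ep[\|M_t - M_t^*\|_F^2] \;=\; (1-\delta_t)\,\Ep[\|A_t\|_F^2] \;+\; \delta_t\,\Ep[\twonorm{E_{t-1}}^2].
\end{equation*}

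Third, I would bound the fresh-error term column-by-column:
$\|A_t\|_F^2 = \sum_{j=1}^k \twonorm{\Bbf_j(\x_t^j-\x^{j*})}^2 \le \max_j\twonorm{\Bbf_j}_2^2 \cdot \sum_j \twonorm{\x_t^j-\x^{j*}}^2 = \max_j\twonorm{\Bbf_j}_2^2 \cdot \twonorm{\e_t}^2$.
The bridge back to the side-information error comes from \eqref{eqn:e_sum}, namely $\e_t = \sum_{j=1}^k \e_{t-1}^j$, which combined with Jensen's inequality gives $\twonorm{\e_t}^2 \le k\,\twonorm{E_{t-1}}^2$. Substituting, $\Ep[\|A_t\|_F^2] \le k\max_j\twonorm{\Bbf_j}_2^2\,\Ep[\twonorm{E_{t-1}}^2] = \twonorm{\Bbf}_\text{col}^2\,\Ep[\twonorm{E_{t-1}}^2]$, and a one-step time shift reproduces \eqref{eqn:converge_rate_col}.

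The main obstacle is re-using the isotropy lemma $\Ep[\Vbf_t\Vbf_t^\top] = (1-\delta_t)\I_k$ from the analysis of Algorithm~\ref{alg:pg}; this is where the combined cyclic sparsity of Definition~\ref{def:cyclic} and the rotational invariance of the Gaussian fills do the real work. Granted that lemma, the only genuinely new feature of the column-wise case is the extra $\sqrt{k}$ in $\twonorm{\Bbf}_\text{col} = \sqrt{k}\max_j\twonorm{\Bbf_j}_2$: it is the unavoidable price of having a single master-side error vector $\e_t$ that must be split among $k$ per-worker side-information errors, and it enters precisely through the Jensen step $\twonorm{\sum_{j=1}^k \e_{t-1}^j}^2 \le k\sum_j\twonorm{\e_{t-1}^j}^2$.
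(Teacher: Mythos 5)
Your proposal is correct and follows essentially the same route as the paper's proof: the same error recursion obtained by subtracting the fixed-point identity from \eqref{eqn:substitute_decoding_vertical}, the same orthogonality/Pythagoras split, and the same isotropy lemma $\Ep[\Vbf_t\Vbf_t^\top]=(1-\delta_t)\I_k$. The only difference is bookkeeping: the paper vectorizes via Kronecker products and bounds $\twonorm{\Bbf_\text{expand}[\I_N,\ldots,\I_N]}_2=\sqrt{k}\max_j\twonorm{\Bbf_j}_2$ in one step, whereas you keep Frobenius norms and split the same bound into $\max_j\twonorm{\Bbf_j}_2$ plus the Jensen step $\twonorm{\sum_j\e_{t-1}^j}^2\le k\twonorm{E_{t-1}}^2$ --- these are the same inequality in two guises.
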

From Theorem~\ref{thm:converge_col}, we can see that apart from the $\delta_t$ factor again, we have defined a new norm $\twonorm{\Bbf}_\text{col}$, which is the maximum of the induced 2-norms of the column blocks $\Bbf_1,\ldots \Bbf_k$. We show in Lemma \ref{lmm:norm_close_col} that for Erd\"os-R\'enyi model $G(N,p)$, $\text{Pr}\left(\twonorm{\Abf}_\text{col}>\sqrt{\left(1+\frac{k}{N}\right)\frac{1+\epsilon}{1-\epsilon}}\rho(\Abf)\right)\\
  <3Ne^{-\epsilon^2Np/8}+3kNe^{-\epsilon^2Np/(8k)}$. This means that with high probability $\twonorm{\Abf}_\text{col}\approx\rho(\Abf)$ and hence $\twonorm{\Bbf}_\text{col}\approx\rho(\Bbf)$, and the convergence rates of coded power iterations and noiseless power iterations are close.

\section{Applications of Substitute Decoding}\label{sec:extension}

Substitute decoding can be applied to many iterative computing problems. In this section, we show three applications, namely computing multiple leading eigenvectors, computing multiple leading singular vectors and gradient descent. The first two applications are widely used in sparse matrix problems such as spectral clustering \cite{von2007tutorial}, principal component analysis and anomaly detection \cite{prakash2010eigenspokes}.

\subsection{Computing Multiple Eigenvectors using Coded Orthogonal Iterations}\label{sec:spectral_clustering}

\subsubsection{Background on the Orthogonal-Iteration Method}
When the input vector $\y$ is zero, the power-iteration method in \eqref{eqn:power_iteration} is equivalent to computing the principal eigenvector of $\Bbf$. However, we may be interested in more than one eigenvectors. For example, in spectral clustering or spectral embedding, instead of computing a single eigenvector, one often computes the first $r$ eigenvectors of the (normalized) graph Laplacian matrix $\Lbf=\I-\Dbf^{-1/2}\Abf\Dbf^{-1/2}$ \cite{ng2002spectral} and use these eigenvectors as the coordinates of the $r$-dimensional Euclidean embedding of the nodes in the graph. Here, $\Abf$ is the (symmetric) graph adjacency matrix and $\Dbf$ is the diagonal degree matrix\footnote{Notice that in spectral clustering, we need the first $r$ eigenvectors corresponding to the $r$ smallest eigenvalues instead of the largest eigenvalues. However, it is well-known that (1) all eigenvalues of $\Lbf=\I-\Dbf^{-1/2}\Abf\Dbf^{-1/2}$ are within the range $[0,2]$, and (2) since $\Lbf$ is symmetric, all the eigenvectors of $\Lbf$ form an orthonormal basis and are also the eigenvectors of $\I$. Therefore, the $r$ eigenvectors corresponding to the $r$ smallest eigenvalues of $\Lbf$ are also the $r$ eigenvectors corresponding to the $r$ largest eigenvalues of $2\I-\Lbf=\I+\Dbf^{-1/2}\Abf\Dbf^{-1/2}$.}.

The power-iteration method can be generalized to the \emph{orthogonal-iteration} method to compute the first $r$ eigenvectors \cite{bauer1957verfahren}. To compute the first $r$ eigenvectors of an $N\times N$ matrix $\Bbf$, we initialize an $N\times r$ random matrix $\Xbf_0$ and iterate the following until convergence:
\begin{itemize}
    \item Compute
    \begin{equation}\label{eqn:QR_before}
        \Zbf_t=\Bbf \Xbf_t.
    \end{equation}
    \item Factorize
    \begin{equation}\label{eqn:QR}
       \Qbf_t\Rbf_t=\Zbf_t,
    \end{equation}
    using QR-decomposition.
    \item Set $\Xbf_{t+1}=\Qbf_t$.
\end{itemize}
Usually the $N\times r$ matrix $\Xbf_t$ is tall and thin, because the number of required eigenvectors $r$ is much less than $N$.

The orthogonal-iteration method in \eqref{eqn:QR_before} and \eqref{eqn:QR} is the prototype of many large-scale eigendecomposition methods \cite{rutishauser1969computational,stewart1976simultaneous}\cite[Section 7.3.2]{golub2012matrix}\cite{berry2006parallel,halko2011finding,kempe2008decentralized,ji2016apache}. For example, to accelerate the convergence of the orthogonal-iteration method for a symmetric $\Bbf$, one can apply the following procedure as suggested in \cite{rutishauser1969computational,rutishauser1970simultaneous,stewart1976simultaneous,berry2006parallel} after obtaining $\Qbf_t$ and $\Rbf_t$ in \eqref{eqn:QR}:
\begin{itemize}
    \item Compute
        \begin{equation}\label{eqn:acc1}
            \Dbf_t=\Rbf_t\Rbf_t^\top.
        \end{equation}
    \item Compute the eigendecomposition
        \begin{equation}\label{eqn:acc2}
            \Sbf_t\boldsymbol{\Lambda}_t\Sbf_t^\top=\Dbf_t.
        \end{equation}
    \item Compute the modified eigenvectors
        \begin{equation}\label{eqn:acc3}
            \Xbf_{t+1}=\Qbf_t\Sbf_t.
        \end{equation}
\end{itemize}
Another method is to apply QR-decomposition \eqref{eqn:QR} only once after computing \eqref{eqn:QR_before} several times in each iteration, which has the effect of making the spectrum of $\Bbf$ more skewed and making the convergence faster \cite{halko2011finding}.

\subsubsection{Coded Orthogonal Iterations for Computing Multiple Eigenvectors}
We show how to implement a coded version of the orthogonal-iteration method to compute the first $r$ eigenvectors of an $N\times N$ matrix $\Bbf$. The number $r$ is much less than the size of $\Bbf$, so the QR-decomposition \eqref{eqn:QR} can be directly computed at the master node. The only computation at the workers is the matrix-matrix multiplication $\Zbf_t=\Bbf \Xbf_t$ in \eqref{eqn:QR_before}.

The procedures of coded orthogonal iterations are outlined in Algorithm~\ref{alg:oi}. Similar to the computation of $\Bbf \x_t+\y$ in Algorithm~\ref{alg:pg_col}, we partition the matrix $\Bbf$ into column blocks $[\Bbf_1,\ldots,\Bbf_k]$ and distribute them to the workers as specified by the sparsity pattern matrix $\Gbf$. At each iteration, the master node breaks the $\Xbf_t$ into $k$ submatrices $\Xbf_t^j,j=1,\ldots,k$ and each worker computes a linear combination of $\Wbf_{j,t}:=\Bbf_j\Xbf_t^j,j=1,\ldots,k$. The $i$-th worker computes the linear combination $\sum_{j=1}^k g_{ij}^t \Wbf_{j,t}$. The collected results at the master node, if no noise is present, can be compactly written as $\bar{\Wbf}_t(\Gbf^{(t)})^\top$, where $\bar{\Wbf}_t=[\Wbf_{1,t},\Wbf_{2,t},\ldots,\Wbf_{k,t}]$. Notice that a rigorous way to write $\bar{\Wbf}_t(\Gbf^{(t)})^\top$ is to change $\Gbf^{(t)}$ into $\Gbf^{(t)}\otimes \I$ to match the matrix sizes. However, to avoid cumbersome notation and provide a clean presentation of the main idea, we view $\Wbf_{j,t},j=1,\ldots,k$ as symbols and still uses $\Gbf^{(t)}$. Again, similar to Algorithm~\ref{alg:pg_col}, the master node maintains an estimate of all the $\Wbf_{j,t}$, which can be written as $\widehat{\Wbf}_t = [\widehat{\Wbf}_{1,t},\ldots,\widehat{\Wbf}_{k,t}]$. In the presence of erasure noise, the master node can combine the  partial coded results $\bar{\Wbf}_t(\Gbf_s^{(t)})^\top$ and the previous results $\widehat{\Wbf}_{t-1}$ to obtain the current estimate $\widehat{\Wbf}_t$.

In order to accelerate the convergence, we further apply the procedures from \eqref{eqn:acc1} to \eqref{eqn:acc3}. This modification has to be applied carefully to the coded computing because the modified eigenvectors in \eqref{eqn:acc3} may not have the same order as in the previous iteration. Therefore, the naive combination of the results from the past and the current iteration in substitute decoding can be affected by the order of the eigenvectors. To address this problem, we notice that \eqref{eqn:QR}-\eqref{eqn:acc3} are equivalent to the following:
\begin{itemize}
  \item Factorize
    \begin{equation}\label{eqn:acc21}
       \Qbf_t\Rbf_t=\Zbf_t,
    \end{equation}
  \item Compute the SVD
    \begin{equation}\label{eqn:acc22}
        \Rbf_t = \Sbf_t\boldsymbol{\Lambda}_t^{\frac{1}{2}} \widetilde{\Sbf}_t^\top,
    \end{equation}
  \item Compute the modified eigenvectors
    \begin{equation}\label{eqn:acc23}
        \Xbf_{t+1}=\Qbf_t\Sbf_t = \Zbf_t\Rbf_t^{-1}\Sbf_t = \Zbf_t \widetilde{\Sbf}_t\boldsymbol{\Lambda}_t^{-\frac{1}{2}}.
    \end{equation}
\end{itemize}
Thus, we can see that the modified QR-steps in \eqref{eqn:acc21}-\eqref{eqn:acc23} essentially right-multiplies a matrix $\widetilde{\Sbf}_t\boldsymbol{\Lambda}_t^{-\frac{1}{2}}$ to $\Zbf_t$, in which the matrix $\widetilde{\Sbf}_t$ has the function of reordering the eigenvectors, because $\boldsymbol{\Lambda}_t^{-\frac{1}{2}}$ is only a diagonal matrix. Therefore, at each iteration, we apply the same reordering to $\widehat{\Wbf}_t$ and obtain
\begin{equation}\label{eqn:eigen_rotate}
\widehat{\Wbf}_t^\text{rotate} = [\widehat{\Wbf}_{1,t}\widetilde{\Sbf}_t,\ldots,\widehat{\Wbf}_{k,t}\widetilde{\Sbf}_t],
\end{equation}
and uses $\widehat{\Wbf}_{t-1}^\text{rotate}$ from last iteration for substitute decoding at the $t$-th iteration
\begin{equation}\label{eqn:substitute_decoding_sc}
\widehat{\Wbf}_{t}=\bar{\Wbf}_t\Vbf_t\Vbf_t^\top+\widehat{\Wbf}_{t-1}^\text{rotate}\widetilde{\Vbf}_t\widetilde{\Vbf}_t^\top,
\end{equation}
where $\bar{\Wbf}_t\Vbf_t\Vbf_t^\top$ are computed from $\bar{\Wbf}_t(\Gbf^{(t)})^\top$ using the SVD on $\Gbf^{(t)}$. The summation of the symbols in $\widehat{\Wbf}_{t}$ is the estimate of $\Zbf_t$:
\begin{equation}
\Zbf_t = \sum_{j=1}^k \widehat{\Wbf}_{j,t},
\end{equation}
Then, instead of \eqref{eqn:QR}-\eqref{eqn:acc3}, the master node performs the equivalent steps \eqref{eqn:acc21}-\eqref{eqn:acc23} to obtain $\Xbf_{t+1}$. In order to reduce the decoding time, the master can apply the same method as in Section \ref{sec:vertical} and computes $\Zbf_t$ directly using
\begin{equation}\label{eqn:substitute_decoding_sc_2}
\Zbf_t=\bar{\Wbf}_t(\Gbf_s^{(t)})^\top[\Ubf_t\Dbf_t^{-1}\Vbf_t^\top\one_k]+\widehat{\Wbf}_{t-1}^\text{rotate}[\widetilde{\Vbf}_t\widetilde{\Vbf}_t^\top\one_k],
\end{equation}
and updates \eqref{eqn:substitute_decoding_sc} and \eqref{eqn:eigen_rotate} while communicating with the workers.

\begin{algorithm}[!h]
   \caption{Computing $r$ Eigenvectors Using Coded Orthogonal Iterations with Column Splitting}\label{alg:oi}
\begin{algorithmic}
    \STATE {\bfseries Input:} Matrix $\Bbf$ and sparsity pattern $\Gbf$.
    \STATE {\bfseries Preprocessing:} Partition $\Bbf$ into column blocks and store them distributedly as specified by the sparsity pattern $\Gbf$. Generate a series of random generator matrices $\Gbf^{(t)},t=1,2,\ldots,T$.

   \STATE {\bfseries Master Node:} At each iteration, partition $\Xbf_t$ into $k$ submatrices $\Xbf_t^j,j=1,\ldots,k$ and transmits $\Xbf_t^j$ to all worker $i$ such that $\Gbf_{ij}=1$. Then, receive partial coded results of $\Bbf \Xbf_t$ and conduct substitute decoding \eqref{eqn:substitute_decoding_sc_2} and get $\Zbf_t$.

   Perform the QR-decomposition steps \eqref{eqn:acc21}-\eqref{eqn:acc23} and set $\Xbf_{t+1}=\Qbf_t$.

   While transmitting $\Xbf_{t+1}$ to the other workers, the master node updates the estimates \eqref{eqn:substitute_decoding_sc} and \eqref{eqn:eigen_rotate}.

	\STATE {\bfseries Workers:} The $i$-th worker computes $\sum_{j=1}^k g_{ij}^t \Wbf_{j,t}$.

\STATE {\bfseries Output:} The master node outputs $\Xbf_T$.
\end{algorithmic}
\end{algorithm}

\subsection{Computing Multiple Singular Vectors using Coded Orthogonal Iterations}\label{sec:pca}

In the previous section, we discussed the application of computing $r>1$ eigenvectors using coded orthogonal iterations. Here, we discuss the computation of singular vectors or the truncated SVD. In this section, we focus on the case when the data matrix $\Bbf$ of size $n\times N$ is large and sparse, and directly computing the $N\times N$ matrix $\Bbf^\top \Bbf$ is not feasible because storing an $N\times N$ dense matrix $\Bbf^\top \Bbf$ is impractical even for distributed storage. However, we can still partition the sparse data matrix $\Bbf$ into blocks and store them distributedly.

One application of SVD is principal component analysis (PCA). In PCA, the $j$-th principal component of the $n\times N$ data matrix $\Bbf$ is defined as the $j$-th column of $\Bbf\Xbf$, where $\Xbf$ is formed by the eigenvectors of the covariance matrix $\Bbf^\top \Bbf$, or the right-singular vectors of $\Bbf$. In most dimension-reduction applications, we only need to compute $r$ leading singular vectors of $\Bbf$, where $r\ll N$. The PCA can help reduce the dimension of the data substantially, in order to obtain reduced generalization error for classification tasks. Another application of computing SVD is to obtain the spectral embedding of a large and sparse matrix in order to detect anomalies \cite{prakash2010eigenspokes}, which we will discuss further in Section~\ref{sec:pca_simu}.

\subsubsection{Noiseless computation}\label{sec:pca_noiseless}
The right-singular vectors of $\Bbf$ are the eigenvectors of $\Bbf^\top \Bbf$, so the methods discussed in this section are essentially the same with Section~\ref{sec:spectral_clustering}. However, the computation of $\Bbf^\top \Bbf$ is infeasible, i.e., we cannot compute the matrix-matrix multiplication step $\Zbf_t=\Bbf^\top\Bbf \Xbf_t$ as in \eqref{eqn:QR_before} by computing $\Bbf^\top \Bbf$ and dividing it into column blocks. If the computation is noiseless, what we can do is to compute the matrix-matrix multiplication step $\Zbf_t=\Bbf^\top\Bbf \Xbf_t$ distributedly in the following way:
\begin{equation}\label{eqn:local_w}
\text{At workers: } \Wbf_{i,t}=\Bbf_i^\top\Bbf_i \Xbf_t,
\end{equation}
\begin{equation}\label{eqn:sum_z}
\text{At the master node: } \Zbf_t=\sum_{i=1}^P \Wbf_{i,t},
\end{equation}
where $\Bbf_i$ is the $i$-th row block in $\Bbf$ and $\Bbf_i$ is stored locally at the $i$-th worker. Notice that for uncoded computation the number of splits $k$ equals the number of workers $P$. At the $t$-th iteration, the master node broadcasts $\Xbf_t$ to all the workers and receives the result $\Wbf_{i,t}=\Bbf_i^\top\Bbf_i \Xbf_t$ from the $i$-th worker. If the computation is noisy and some of the partial results $\Wbf_{i,t}$ are not available, the master node simply uses the previous computational results $\Wbf_{i,t-1}$. After computing $\Zbf_t$, the master node computes the QR-decomposition in \eqref{eqn:acc21}-\eqref{eqn:acc23} to obtain $\Xbf_{t+1}$. Notice that even in the uncoded computation, since the acceleration procedure can rotate the order of eigenvectors, the master node needs to apply the rotation step \eqref{eqn:eigen_rotate}.

\subsubsection{Coded computation}

The method in this part is essentially the same as Algorithm~\ref{alg:oi} so we will skip the details. The only difference is that the partial result is defined as $\Wbf_{j,t}=\Bbf_j^\top\Bbf_j \Xbf_t$ and the master needs to broadcast the entire $\Xbf_t$ to all the workers.

\begin{algorithm}[!h]
   \caption{Computing $r$ Singular Vectors Using Coded Orthogonal Iterations with Column Splitting}\label{alg:pca}
\begin{algorithmic}
    \STATE {\bfseries Input:} Matrix $\Bbf$ and sparsity pattern $\Gbf$.
    \STATE {\bfseries Preprocessing:} Partition $\Bbf$ into row blocks and store them distributedly as specified by the sparsity pattern $\Gbf$. Generate a series of random generator matrices $\Gbf^{(t)},t=1,2,\ldots,T$.

   \STATE {\bfseries Master Node:} At each iteration, broadcast $\Xbf_t$ to all workers. Then, receive partial coded results of $\Bbf^\top\Bbf \Xbf_t$ and conduct substitute decoding \eqref{eqn:substitute_decoding_sc_2} and get $\Zbf_t$.

   Perform the QR-decomposition steps \eqref{eqn:acc21}-\eqref{eqn:acc23} and set $\Xbf_{t+1}=\Qbf_t$.

   While transmitting $\Xbf_{t+1}$ to the workers, the master node updates the estimates \eqref{eqn:substitute_decoding_sc} and \eqref{eqn:eigen_rotate}.

	\STATE {\bfseries Workers:} The $i$-th worker computes $\sum_{j=1}^k g_{ij}^t \Wbf_{j,t}$.

\STATE {\bfseries Output:} The master node outputs $\Xbf_T$.
\end{algorithmic}
\end{algorithm}

\subsection{Computing Gradient Descent using Substitute Decoding}\label{sec:gdcoding}

The substitute decoding method can also be applied to the distributed computing of gradient descent with erasures. Notice that gradient-coding has been studied substantially by previous works \cite{tandon2016gradient,raviv2017gradient,charles2017approximate,ye2018communication,halbawi2017improving}. Some of these works also consider applying sparse encoding matrices to obtain better scalability to more erasures. Comparing to previous results which focus on coded computing of the SUM function, the substitute decoding method maintains a full copy of all the partial gradients from each worker for the purpose of substitution. From this angle, substitute decoding can be viewed as introducing momentum into the gradient computation using coding-based methods.

Suppose the whole dataset is split into $k$ non-overlapping subsets and distributed to $P$ workers according to the sparsity pattern matrix $\Gbf$, i.e., the $j$-th copy is sent to the $i$-th worker if $\Gbf_{i,j}=1$. Each worker obtains a constant number of subsets. At each iteration, each worker computes a constant number of partial gradients $\w_{j,t}$ for all the subset indices $j$ that are allocated to it. Then, it encodes these gradients by
\begin{equation}\label{eqn:zbf_i_GC}
\z_{i,t}=\sum_{j=1}^k g_{ij}^t \w_{j,t},
\end{equation}
and sends the encoded partial gradient to the master node. If all the encoded partial gradients are available, the master node obtains
\begin{equation}
\left[\begin{matrix}
  \z_{1,t}\\
  \vdots\\
  \z_{P,t}
\end{matrix}\right]=\left[\begin{matrix}
  g_{11}^t & \ldots & g_{1k}^t\\
  g_{21}^t & \ldots & g_{2k}^t\\
  \vdots  & \ddots & \vdots\\
  g_{P1}^t & \ldots & g_{Pk}^t
\end{matrix}\right]
\left[\begin{matrix}
  \w_{1,t}\\
  \vdots\\
  \w_{k,t}
\end{matrix}\right]=\Gbf^{(t)} \bar{\Wbf}_t,
\end{equation}
where $\bar{\Wbf}_t:=\left[\begin{matrix}
  \w_{1,t}\\
  \vdots\\
  \w_{k,t}
\end{matrix}\right]$. Then, the master node conducts the substitute decoding and computes the estimate of the full gradients
\begin{equation}\label{eqn:substitute_decoding_gradient_descent}
\widehat{\Wbf}_{t}=\Vbf_t\Vbf_t^\top\bar{\Wbf}_t+\widetilde{\Vbf}_t\widetilde{\Vbf}_t^\top\widehat{\Wbf}_{t-1},
\end{equation}
where $\widehat{\Wbf}_t:=\left[\begin{matrix}
  \widehat{\w}_{1,t}\\
  \vdots\\
  \widehat{\w}_{k,t}
\end{matrix}\right]$ are the estimates of the full gradients at the master node, and the matrices $\Vbf_t$ and $\wt\Vbf_t$ are obtained using SVD in a similar way as in the previous sections. The coded gradient descent algorithm is presented in Algorithm~\ref{alg:gc}.

\begin{algorithm}[!h]
   \caption{Computing Gradient Descent Using Substitute Decoding}\label{alg:gc}
\begin{algorithmic}
	\STATE {\bfseries Input:} The whole dataset and sparsity pattern matrix $\Gbf$.
    \STATE {\bfseries Preprocessing:} Partition the dataset into subsets and store them distributedly as specified by the sparsity pattern $\Gbf$. Generate a series of random generator matrices $\Gbf^{(t)},t=1,2,\ldots,T$.

   \STATE {\bfseries Master Node:} Send out $\x_t$ at each iteration and receive partial coded gradients $\z_{i,t}$ in \eqref{eqn:zbf_i_GC}. Decode the gradients using substitute decoding defined in \eqref{eqn:substitute_decoding_gradient_descent}.

   After that, the master node computes the sum of gradients $\sum_{j=1}^k \widehat{\w}_{j,t}$ and updates the parameter
   \begin{equation}
        \x_t = \x_{t-1}-\epsilon \sum_{j=1}^k \widehat{\w}_{j,t}.
   \end{equation}
	\STATE {\bfseries Workers:} Worker-$i$ computes $\z_{i,t}$ in \eqref{eqn:zbf_i_GC} and sends $\z_{i,t}$ to the master node.

\STATE {\bfseries Output:} The master node outputs $\x_T$ after $T$ iterations.
\end{algorithmic}
\end{algorithm}

\section{Simulation Results on Coded Iterative Computation using Substitute Decoding}\label{sec:simulation_all}

\subsection{Coded Power Iterations for PageRank Computation on the Twitter Graph}\label{sec:simulation}

\subsubsection{The row-splitting case}

To support Theorem~\ref{thm:converge}, we compare uncoded, replication-based power iterations and Algorithm~\ref{alg:pg} on the Twitter graph \cite{leskovec2012learning}. We also show the result of noiseless power iterations. We run 100 independent simulations and average the results. There are $P=20$ workers. In each iteration, 50\% of the workers are disabled randomly. In the uncoded simulation, the graph matrix is partitioned into $P=20$ row blocks. The master node updates $\x_{t+1}=\Bbf\x_t+\y$ on the row blocks where results are available, and maintains the unavailable rows as $\x_t$. In the replication-based simulation, $\Bbf$ is partitioned into 10 row blocks and each one is replicated in 2 workers. Therefore, in each iteration, effectively 50\% of the entries in $\x_t$ get updated in the uncoded simulation and about 75\% of the entries in $\x_t$ get updated in the replication-based simulation. For the coded case, the sparsity pattern matrix $\Gbf$ is randomly generated using Definition~\ref{def:cyclic} with degree $d=2$ and $d=3$. The code is a $(20,10)$ code with rate $1/2$. We show in Table \ref{table:ds} how the sample average estimate of $\delta_t$ changes with the degree of $\Gbf$.
\begin{figure}
	\centering
	\includegraphics[scale=0.4]{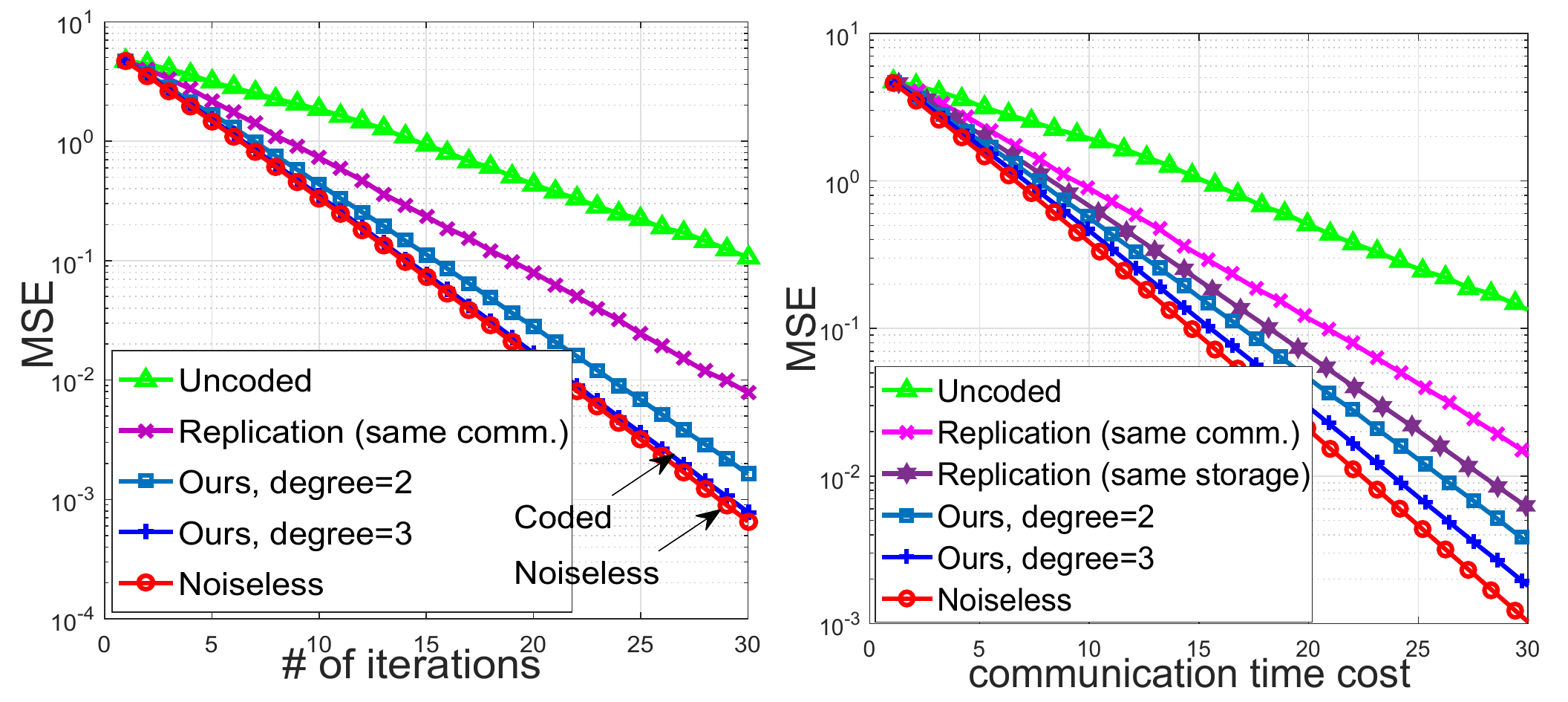}\\
	\caption{The comparison between uncoded, replication-based and substitute-decoding-based power iterations on the Twitter graph. Substitute decoding (\textcolor{blue}{blue line}) achieves almost exactly the same convergence rate as the noiseless case (\textcolor{red}{red line}) for the same number of iterations. Coded computing also beats the other techniques for the same communication time complexity.}\vspace{-2mm} \label{fig:simulation}
\end{figure}

{\bf Cost Analysis:} We also compare the convergence rates against communication cost (see Fig.~\ref{fig:simulation}; right). For Algorithm~\ref{alg:pg}, $\Bbf$ is partitioned into $k=10$ row blocks and encoded into 20 row blocks. The communication complexity in each iteration is $N(1+1/k)=1.1 N$. Similarly, it can be shown that the communication complexity of uncoded and replication-based power iterations are respectively $N(1+1/P)=1.05 N$ and $N(1+1/k)=1.1 N$. Since the average degree of the sparsity pattern is $d=2\sim 3$, computation cost and memory consumption only increase by a constant. We also plot the tradeoff for replication scheme with the same storage cost as $d=3$. However, in this case, the communication complexity for replication is larger, which is $N(1+d/k)=1.3 N$.
\begin{table}
\centering
\begin{tabular}{c|cccc}
\hline
$\bar{d}(\Gbf)$ & 2 & 3 & 4 & 5\\
\hline
$\delta_t$ & 0.1294 & 0.0442 &0.0243 & 0.0040\\
\hline
\end{tabular}
\caption{The factor $\delta_t$ decreases when the degree of $\Gbf$ increases.\vspace{-5mm}}
\label{table:ds}
\end{table}

\subsubsection{The column-splitting case}

\begin{figure}
  \centering
  \includegraphics[scale=0.45]{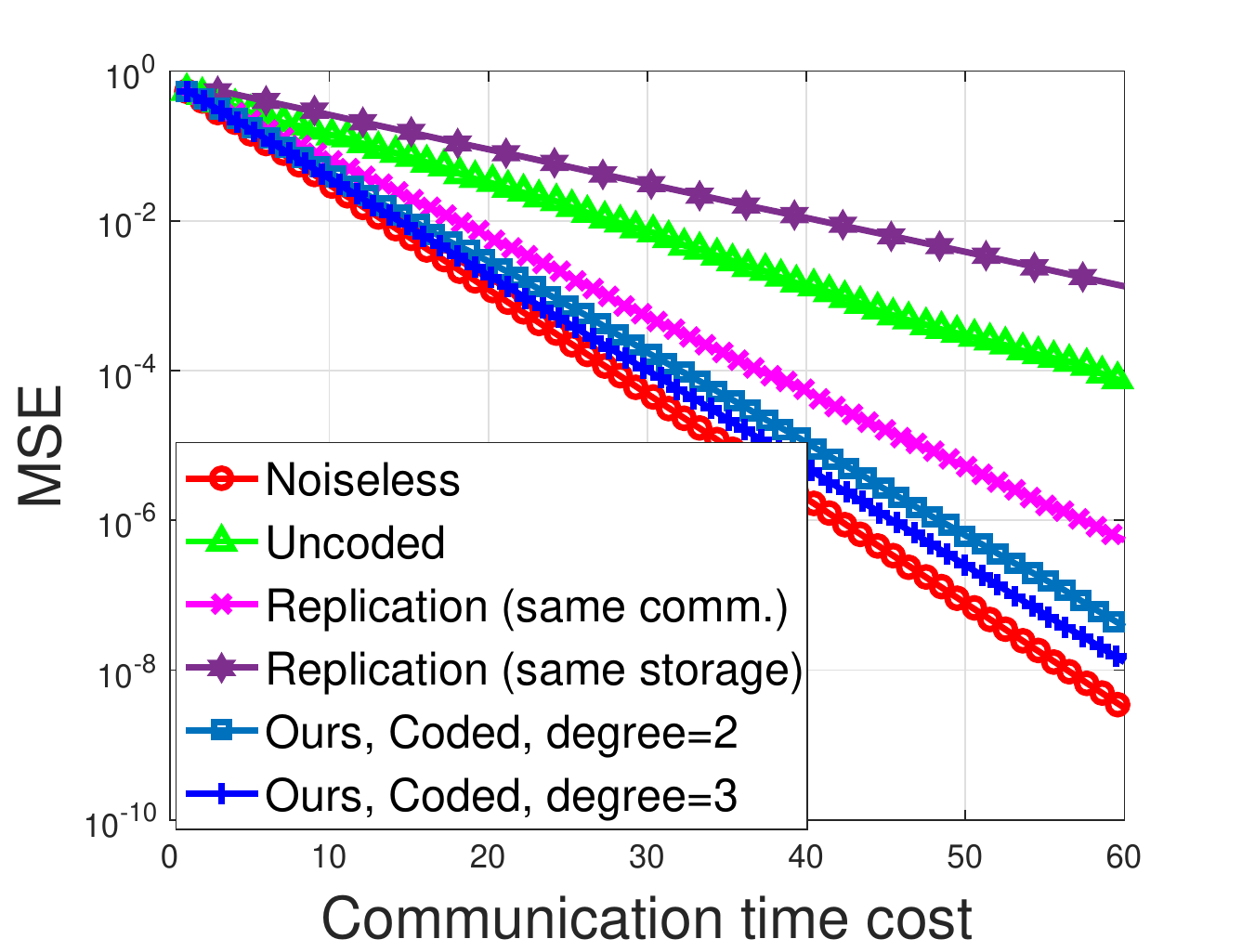}\\
  \caption{The comparison between uncoded, replication-based and substitute-decoding-based power iterations in column-wise splitting. Substitute decoding (\textcolor{blue}{blue line}) achieves almost exactly the same convergence rate as the noiseless case (\textcolor{red}{red line}). Coded computing beats the other techniques for the same communication time complexity.} \label{fig:simulation_col}
\end{figure}

To support Theorem~\ref{thm:converge_col}, we compare Algorithm~\ref{alg:pg_col} with the baseline methods again on the Twitter graph, but with column splitting (see Fig.~\ref{fig:simulation_col}). We split the graph into $k=48$ column-blocks and encoded them into $P=96$ column-blocks. Notice that compared to the row-splitting case, the coded power iterations in the column-splitting case have higher communication cost because each worker has to receive either 2 or 3 subvectors from the master node depending on the degree of the generator matrix (number of non-zeros in each row). This makes its communication complexity increase from $N(1+1/k)$ to $N(1+d/k)$, and $d=2$ or $3$.

\subsubsection{The SUMMA-splitting case}\label{sec:SUMMA_simulation}
\begin{figure}
  \centering
  \includegraphics[scale=0.45]{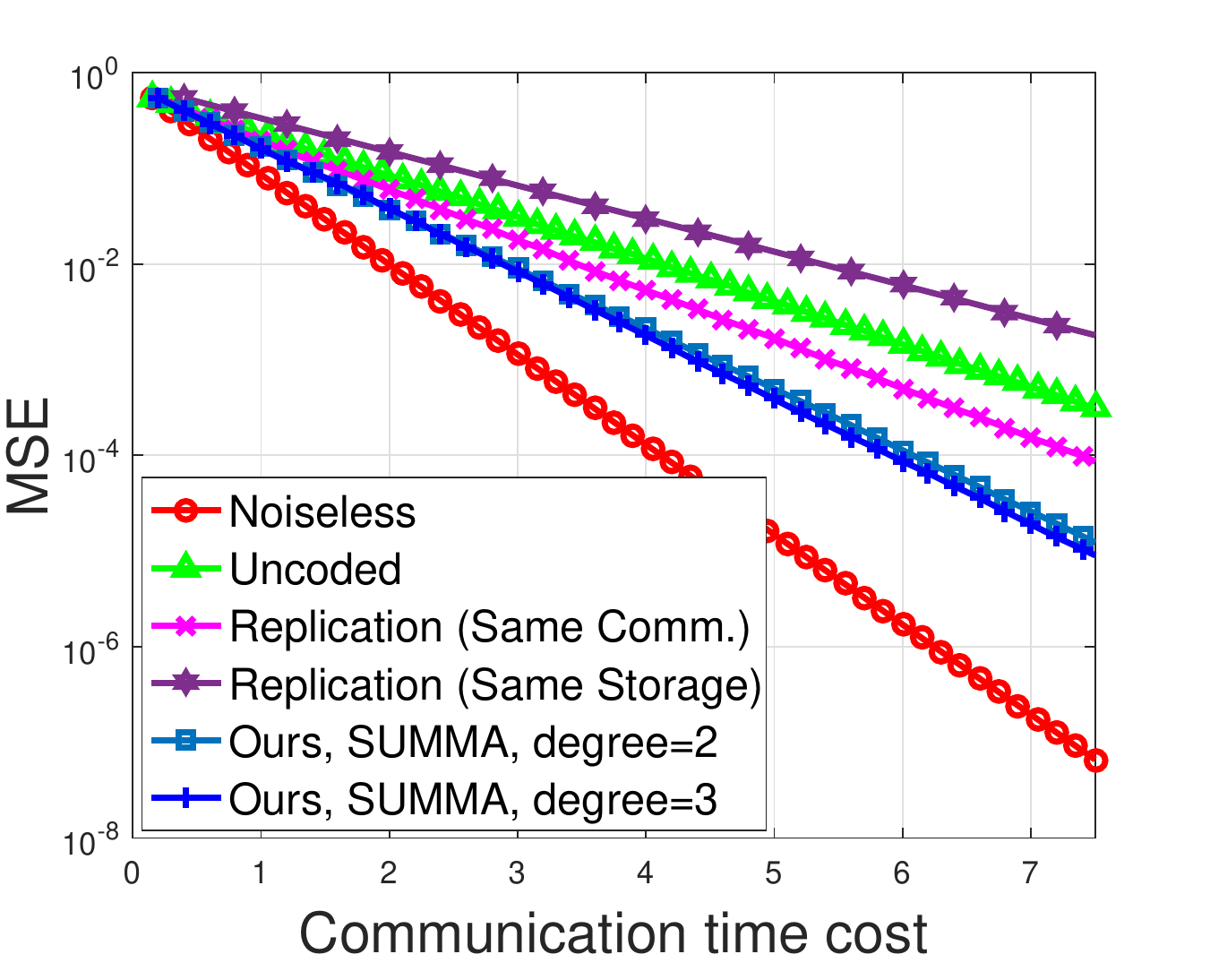}\\
  \caption{The comparison between uncoded, replication-based and substitute-decoding-based power iterations in SUMMA splitting. All schemes use SUMMA splitting on the linear system matrix. Substitute decoding with degree 3 and degree 2 (two \textcolor{blue}{blue} lines) are not close to the noiseless case (\textcolor{red}{red line}) because of increased communication time cost due to low-rate coding (with rate 1/2). Coded computing still beats the other techniques for the same communication time cost.} \label{fig:simulation_SUMMA}
\end{figure}

In SUMMA splitting, we also compare uncoded, replication-based power iterations and the coded power iterations on the Twitter graph (see Fig.~\ref{fig:simulation_SUMMA}). The number of workers is $P=200$ and the matrix is partitioned into $k=100$ square submatrices using a $\sqrt{k}\times \sqrt{k}=10\times 10$ SUMMA splitting. The workers are grouped into subsets of $\sqrt{k}=10$ and in each subset, we apply substitute decoding with a $(P/\sqrt{k},\sqrt{k})=$($20$,$10$) code. The other experimental setting is exactly the same as in the row-splitting case.
\begin{remark}
Notice that the number of workers that respond in one of the worker subsets may be different from that in another subset, and this is the reason that the coded computing cannot be expected to achieve the convergence rate of the noiseless case. This is also why the performance of coded computing seems to saturate when the degree changes from 2 to 3 in Figure~\ref{fig:simulation_SUMMA}. More specifically, the substitute decoding can \emph{provide the full advantage} when the number of responses is equal to $\sqrt{k}$ in each group. When the number of responses in a group is larger than $\sqrt{k}$, the partial encoding matrix inside that particular group has larger than $\sqrt{k}$ rows, but it can only provide similar results with the case when only $\sqrt{k}$ workers respond because the partial encoding matrix in the latter case is already close to full rank (see Table \ref{table:ds} for the small $\delta_t$ when the number of responses is equal to the number of splits). Thus, to achieve improved performance for a sparse code with a larger degree, it may be useful to design a coding algorithm that uses entangled encoding in both rows and columns of the SUMMA splitting. However, the most straightforward way of using entangled row and column coding, i.e., directly combining the coded computing schemes in Algorithm~\ref{alg:pg} and Algorithm~\ref{alg:pg_col}, will increase the communication cost substantially and is not suitable for the purpose of reducing speed in distributed computing.
\end{remark}

\subsection{Coded Orthogonal Iterations for Spectral Clustering}\label{sec:spec_clus_simu}
\begin{figure}
  \centering
  \includegraphics[scale=0.45]{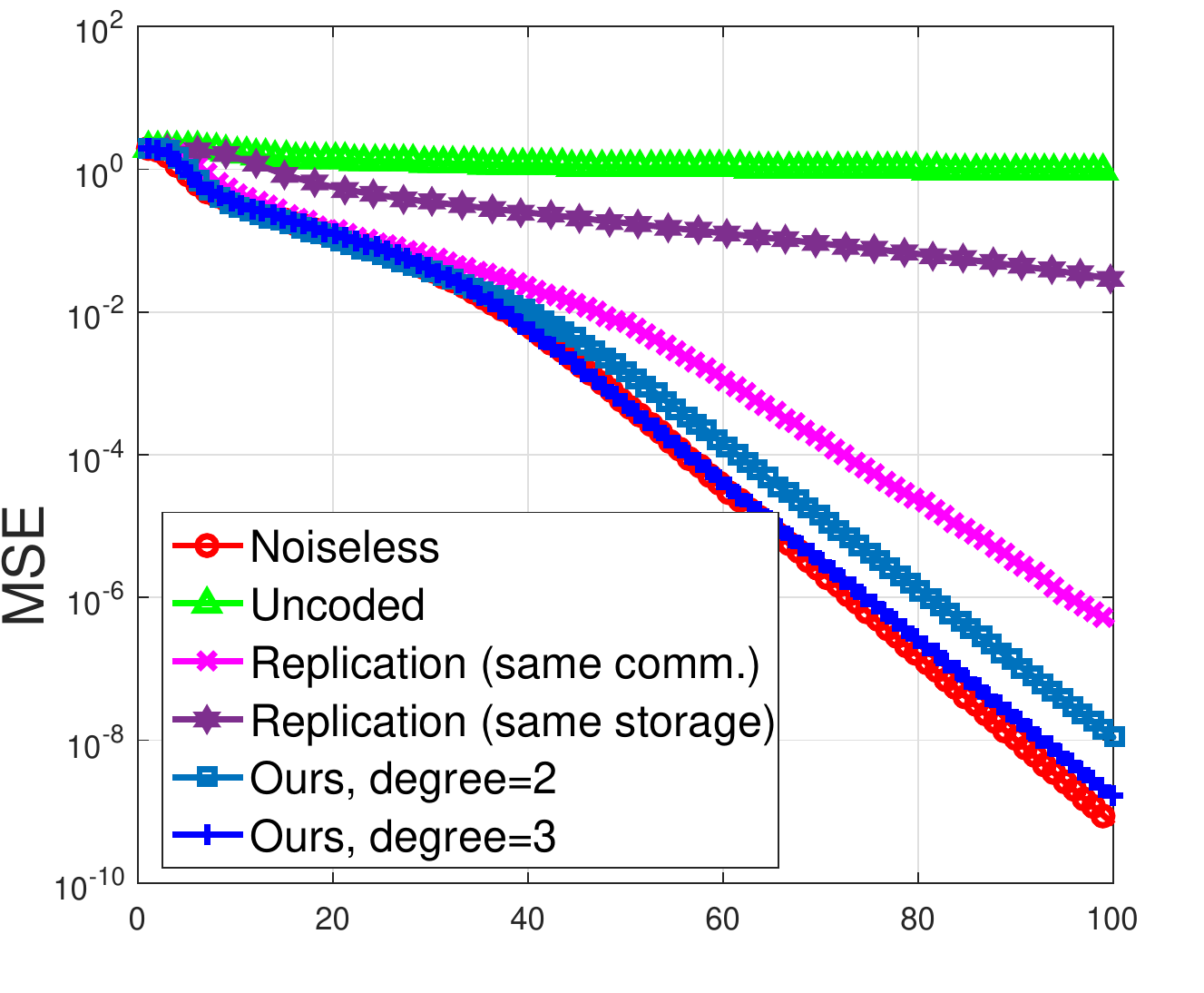}\\
  \caption{The comparison between uncoded, replication-based and substitute-decoding-based orthogonal iterations. Substitute decoding (\textcolor{blue}{blue line}) achieves almost exactly the same convergence rate as the noiseless case (\textcolor{red}{red line}).}\label{fig:simulation2}
\end{figure}
\begin{figure}
  \centering
  \includegraphics[scale=0.45]{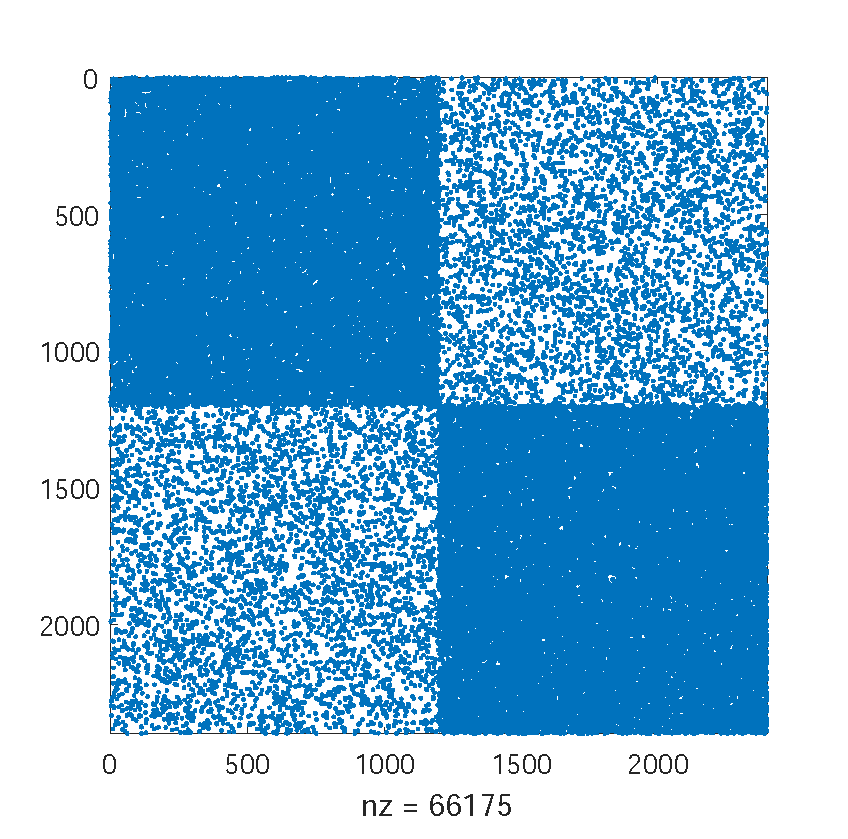}\\
  \caption{This figure shows the clustering result of the graph adjacency matrix using the spectral clustering algorithm with coded computing techniques.}\label{fig:clustering}
\end{figure}

To test the performance of Algorithm~\ref{alg:oi}, we compare uncoded, replication-based orthogonal iterations and Algorithm~\ref{alg:oi} in the application of spectral clustering \cite{von2007tutorial,ng2002spectral} on synthesized graphs (see Fig.~\ref{fig:simulation2}) generated from the stochastic block model with two clusters. We also show the result of noiseless orthogonal iterations. We compute the first two eigenvectors of the normalized graph Laplacian matrix and measure convergence in terms of the MSE of the eigenvector estimation. The second eigenvector is used to generate the clustering result in Fig.~\ref{fig:clustering} with a threshold value 0, i.e., the nodes are partitioned into two clusters based on the signs of the corresponding entries in the second eigenvector.

We generate a graph from the stochastic block model with two clusters and with intra-cluster connection probability 0.02 and inter-cluster connection probability 0.003. There are $P=96$ workers. In each iteration, 50\% of the workers are disabled randomly. In the uncoded simulation, the graph matrix is partitioned into $P=96$ column blocks. The master node updates $\Zbf_{t}=\Bbf\Xbf_t$ using the column blocks where results are available, and maintains the unavailable column blocks from the last iteration. In the replication-based simulation (same communication), $\Bbf$ is partitioned into $k=48$ column blocks and each one is replicated in 2 workers. The replication-based method (same storage) uses the same storage of data as the coded case but does not compute the linear combinations of the partial results. For the case of coded computing, the sparsity pattern matrix $\Gbf$ is randomly generated using Definition~\ref{def:cyclic} for degree $d=2$ and $d=3$. The code is a $(96,48)$ code with rate $1/2$. We run 100 independent simulations and average the results. Since the number of eigenvectors to compute is very small, we do not use the acceleration method in \eqref{eqn:acc21}-\eqref{eqn:acc23}. In this case, the uncoded computation does not converge at all.

{\bf Cost Analysis:} In each iteration, the communication complexity is $Nr(1+d/k)=1.04 Nr$ or $1.06Nr$ because we have $r$ eigenvectors to compute. Similarly, it can be shown that the communication complexity of uncoded, replication-based (same communication) and replication-based (same storage) orthogonal iterations are respectively $Nr(1+1/P)=1.01 Nr$, $Nr(d+d/k)=3.06 Nr$ (we use $d$=3 for replication with the same storage) and $Nr(1+1/k)=1.02 Nr$. Therefore, the communication costs of these strategies are similar.

\subsection{Coded Orthogonal Iterations for Singular Value Decomposition}\label{sec:pca_simu}
\begin{figure}
  \centering
  \includegraphics[scale=0.45]{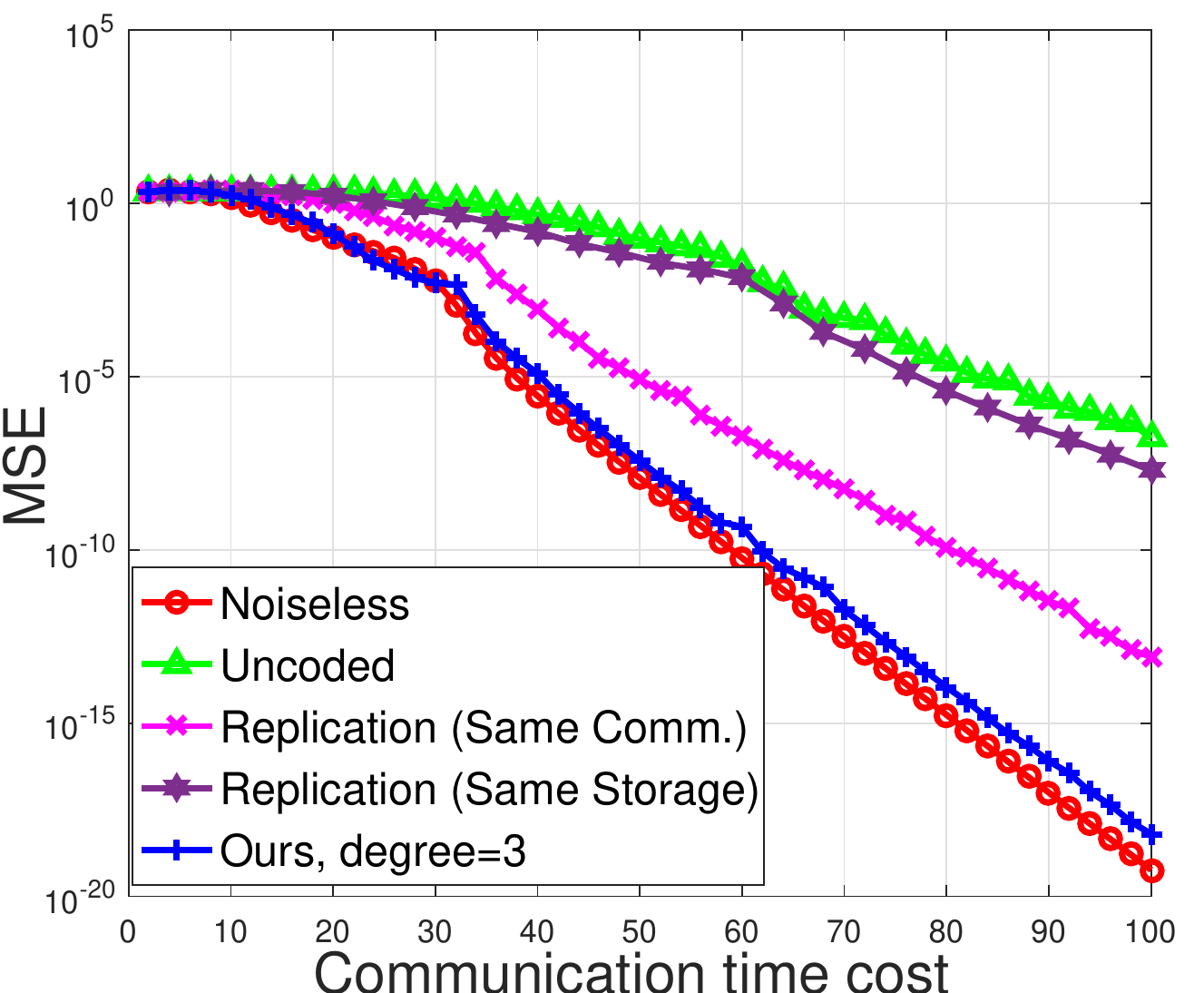}\\
  \caption{The comparison between uncoded, replication-based and substitute-decoding-based orthogonal iterations for principal component analysis. Substitute decoding (\textcolor{blue}{blue line}) achieves almost exactly the same convergence rate as the noiseless case (\textcolor{red}{red line}). Coded computing beats the other techniques for the same communication time complexity.}\label{fig:simulation3}
\end{figure}

\begin{figure}
  \centering
  \includegraphics[scale=0.45]{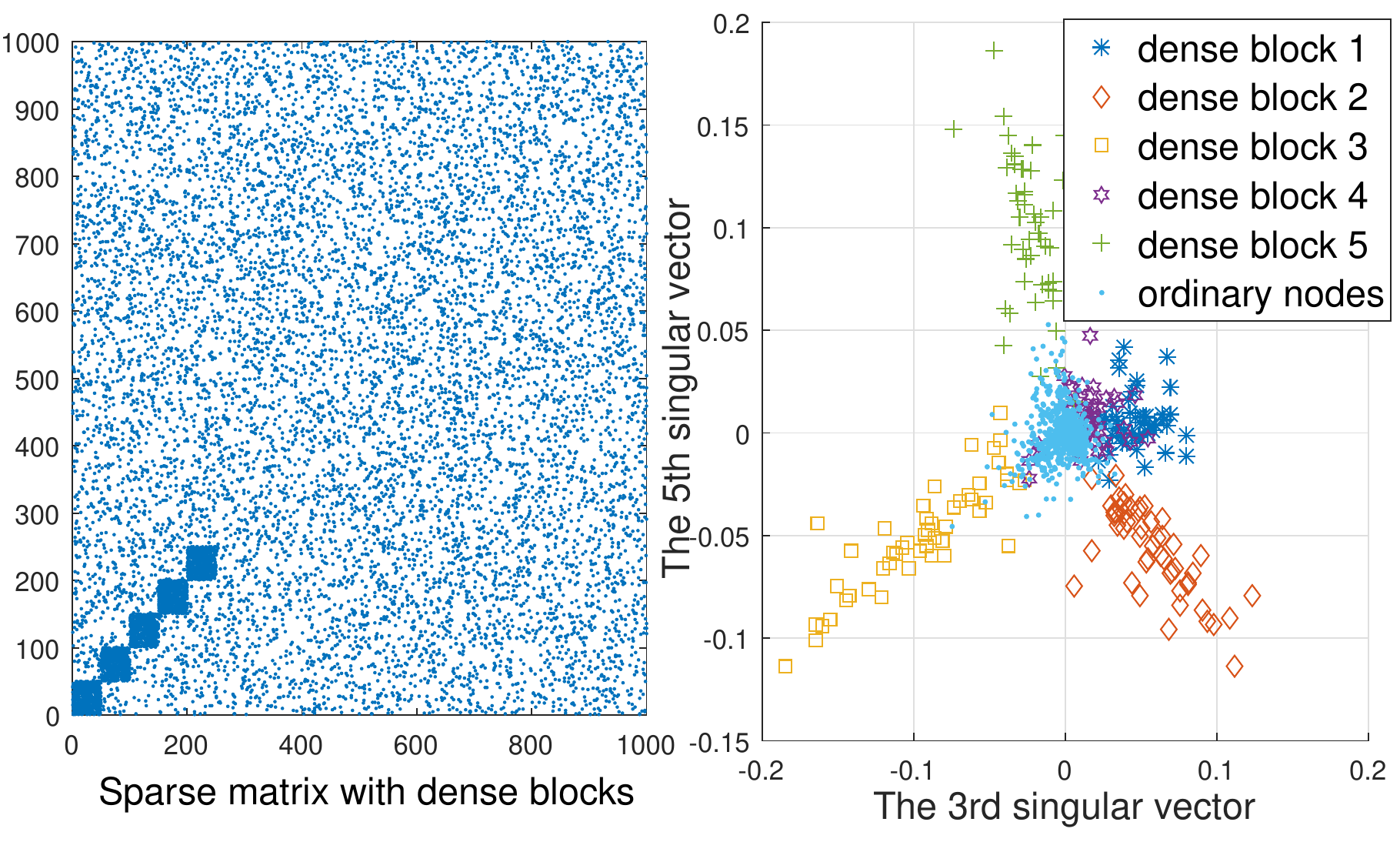}\\
  \caption{This figure shows the phenomenon of ``Eigenspokes'', which shows that the principal components of a sparse matrix with dense blocks can have spoke-like patterns\cite{prakash2010eigenspokes}. These patterns can help identify anomalous dense blocks inside a huge network.}\label{fig:eigen_spoke}
\end{figure}

We compare Algorithm~\ref{alg:pca} with the baseline algorithms on synthesized matrices with planted dense submatrices (see Fig.~\ref{fig:eigen_spoke}; left) and compute the first 5 singular vectors. We report the MSE of the computed eigenvectors in Fig.~\ref{fig:simulation3}. We also show the result of noiseless orthogonal iterations. We run 100 independent simulations and average the results. In each simulation, we generate a random sparse matrix of size $1000\times 1000$ with non-zero probability 0.01 and plant 5 dense blocks of size $50\times 50$ with non-zero probability 0.2. Each non-zero entry is uniformly distributed in $[0,1]$. There are $P=100$ workers. In each iteration, 50\% of the workers are disabled randomly. The noiseless case is the same as the one described in Section~\ref{sec:pca_noiseless}. In the uncoded simulation, the master node maintains the partial computation result $\Zbf_{i,t}=\Bbf_i^\top \Bbf_i\Xbf_t$ from the $i$-th worker at each time slot. If at the $(t+1)$-th time slot the $i$-th worker fails to send the result, the master node just uses the same partial result in the last iteration. In the replication-based simulation (same communication), $\Bbf$ is partitioned into $k=50$ row blocks and each one is replicated in 2 workers. The replication-based method (same communication) is similar to the uncoded one except that each $\Zbf_{i,t}$ is computed in two workers for the purpose of achieving fault/straggler tolerance. The replication-based method (same storage) uses the same storage of data as the coded case but does not compute the linear combinations of the partial results. For the coded case, the sparsity pattern matrix $\Gbf$ is randomly generated by assigning 3 ones in each row. The code is a $(100,50)$ code with rate $1/2$. We also report the phenomenon of ``Eigenspokes'' in the right part of Fig. \ref{fig:eigen_spoke}. For each scattered point ($x,y$), $x$ is the corresponding entry in the 3rd singular vector, and $y$ is the corresponding entry in the 5th singular vector. The scattered points in the anomalous dense blocks show regular patterns on this plot.

{\bf Cost Analysis:} The communication complexity is $2Nr$ for all schemes except the replication scheme with the same storage, in which case the communication complexity is $2(1+d)Nr$, where $d$ is the number of ones assigned to each row of the encoding matrix. The computation cost of coded computing increases by a constant factor compared to the uncoded case.

\subsection{Coded Gradient Descent using Substitute Decoding}\label{sec:gc_simu}

\begin{figure}
  \centering
  \includegraphics[scale=0.45]{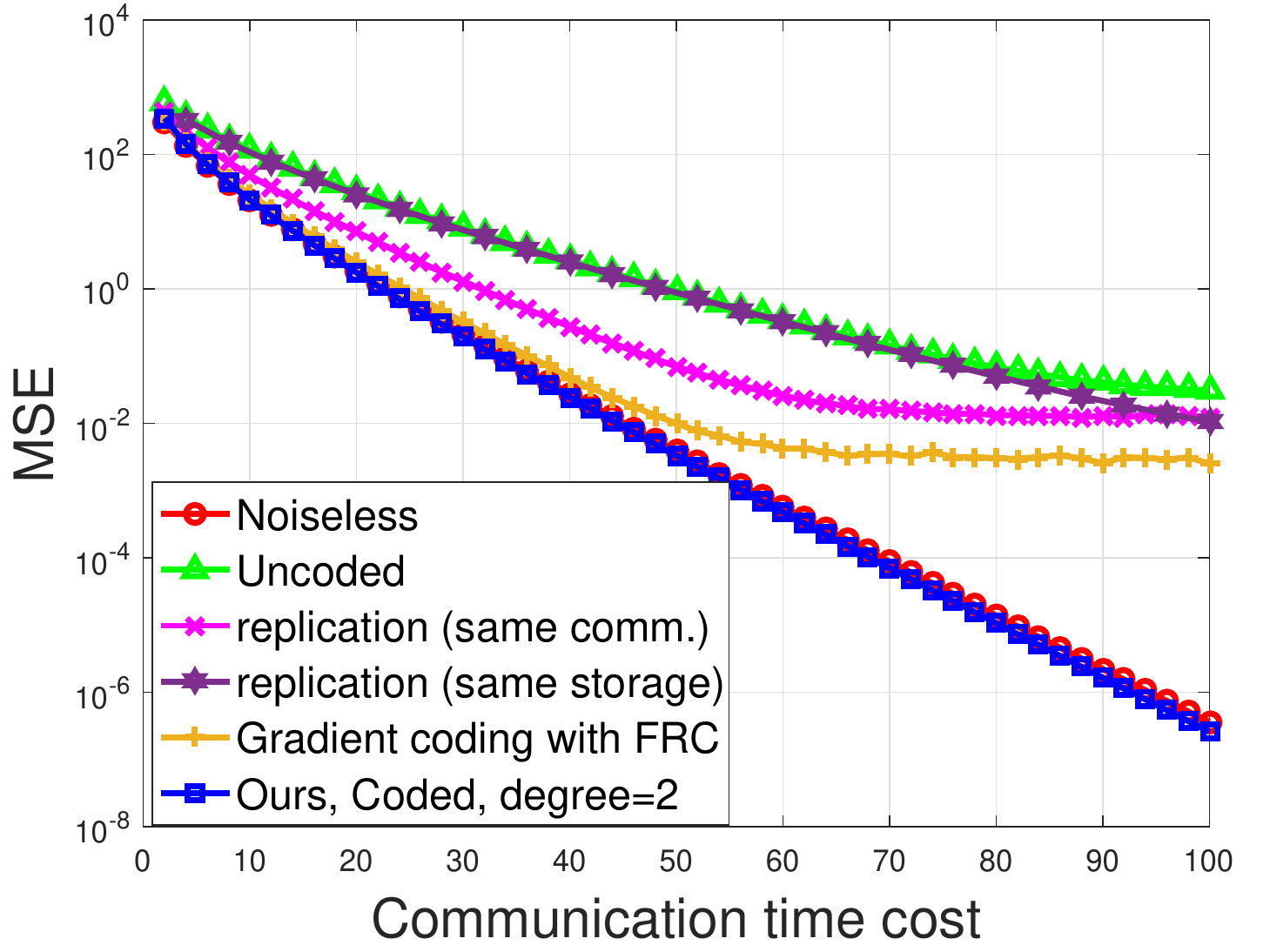}\\
  \caption{The comparison between uncoded, replication-based, approximate-gradient-coding-based \cite{charles2017approximate} and substitute-decoding-based gradient-descent computing. Substitute decoding (\textcolor{blue}{blue line}) achieves exactly the same convergence rate as noiseless computation (\textcolor{red}{red line}). Coded computing beats the other techniques for the same communication time complexity. The reason that the coded computing converges slightly faster than the noiseless case is explained in Remark \ref{rem:faster_than_noiseless}.}\label{fig:simulation4}
\end{figure}

To test the performance of Algorithm~\ref{alg:gc}, we compare uncoded, replication-based, and approximate-gradient-coding-based \cite{charles2017approximate} gradient computing (using fractional repetition codes) and Algorithm~\ref{alg:gc} on synthesized data (see Fig.~\ref{fig:simulation4}). We also show the result of noiseless gradient descent.

We compute the result of the following optimization problem
\begin{equation}
\min_\x \twonorm{\y-\Abf\x}_2,
\end{equation}
where $\Abf$ is the data matrix of size 5000$\times$1000 and $\y$ is of length 5000. We compute $\x$ using the vanilla gradient descent
\begin{equation}
\x_{t+1} = \x_t - \epsilon \frac{1}{N}\Abf^\top(\Abf \x_t-\y).
\end{equation}

We run 100 independent simulations and average the results. In each simulation, we generate a random Gaussian matrix $\Abf$ and a Gaussian random vector $\y$. There are $P=100$ workers. All schemes use the same step size $\epsilon=0.5$. In the noiseless case, we partition the dataset into 100 parts and store one part at each worker. All workers can successfully compute the correct results and send to the master. In the other cases, in each iteration, 50\% of the workers are disabled randomly. In the uncoded simulation, the master aggregates the gradients from only the workers that successfully send back the partial gradients. In the replication-based simulation (same communication), the data is partitioned into 50 subsets and each one is replicated in 2 workers. The replication-based method (same communication) is similar to the uncoded one except that each partial gradient is computed in two workers for the purpose of achieving fault/straggler tolerance. The replication-based method (same storage) uses the same storage of data as the coded case but does not compute the linear combinations of the partial results. For the coded case, the sparsity pattern matrix $\Gbf$ is randomly generated by assigning 2 ones in each row. The code is a $(100,50)$ code with rate $1/2$. The gradient coding algorithm that we compare with is the one in \cite{charles2017approximate} using fractional repetition code. The coding matrix is also of size $(100,50)$ and there are two ones in each row, i.e., each worker computes two partial gradients and transmits the sum, and each pair of partial gradients is computed at four workers. We choose this algorithm to compare with because it also computes the approximate gradient.

{\bf Cost Analysis:} The communication complexity in this case is $2d_\text{data}$ where $d_\text{data}$ is the dimension of the data, because we have two rounds of communication during each iteration. The communication costs of all the compared schemes are the same except the replication scheme with the same storage, in which case the communication complexity is $2(1+d)d_\text{data}$. The computation cost of the coded method increases by a constant factor compared to the uncoded method and replication-based method (same communication). The gradient coding method, the replication method (same storage) and our algorithm have exactly the same communication cost, computation cost and storage cost.
\begin{remark}\label{rem:faster_than_noiseless}
It may be surprising that the result of coded computing actually converges slightly faster than the noiseless case. Our explanation is that the coded computing with substitute decoding provides a way of combining past gradients with the current gradients and hence introduces a certain type of momentum into the computation of gradient descent. It has been observed for long \cite{rumelhart1986learning} that introducing momentum may prevent the convergence trajectory from oscillating in a ``narrow valley''. Therefore, we conjecture that for the specific problem of computing gradient descent, the substitute decoding method provides a coded way of introducing momentum. Deeply investigating this behavior and the possible improvements in coding schemes resulted from this behavior is our future goal.
\end{remark}

\section{Proofs}\label{sec:proof_all}
\subsection{Proof of Theorem~\ref{thm:converge}}\label{sec:proof}
\begin{lemma}\label{lmm:exp_proj}
If the sparsity pattern in Definition~\ref{def:cyclic} is used and Assumption~\ref{ass:noise} holds, the projection matrix $\Vbf_t\Vbf_t^\top$ satisfies
\begin{equation}\label{eqn:ds}
\Ep[\Vbf_t\Vbf_t^\top]=(1-\delta_t)\I_k,
\end{equation}
where the expectation is taken respect to the randomness of non-zero entries' values (the sparsity pattern $\Gbf$ is fixed) and the randomness of the workers' failure events.
\end{lemma}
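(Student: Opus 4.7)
The plan is to exploit two symmetries of the distribution of $\Gbf_s^{(t)}$: the sign-flip symmetry of its i.i.d.\ Gaussian nonzero entries, and the cyclic-shift symmetry of the sparsity pattern $\Gbf$ from Definition~\ref{def:cyclic}. Since $\Vbf_t\Vbf_t^\top$ is by construction the orthogonal projector onto the row space of $\Gbf_s^{(t)}$, any distributional symmetry of $\Gbf_s^{(t)}$ pulls back to a symmetry of $\Ep[\Vbf_t\Vbf_t^\top]$. Together the two symmetries will force $\Ep[\Vbf_t\Vbf_t^\top]$ to be a scalar multiple of $\I_k$, and a trace computation will identify the scalar as $1-\delta_t$.

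First I would observe that for any diagonal sign matrix $\Dbf=\diag(\pm1,\ldots,\pm1)$, the matrix $\Gbf_s^{(t)}\Dbf$ has the same sparsity pattern as $\Gbf_s^{(t)}$ and its nonzero entries remain i.i.d.\ standard Gaussians (symmetry of the Gaussian), so $\Gbf_s^{(t)}\Dbf\overset{d}{=}\Gbf_s^{(t)}$. Right-multiplication by $\Dbf$ sends the row space $\V$ to $\Dbf\V$, whose projector equals $\Dbf\Vbf_t\Vbf_t^\top\Dbf$; taking expectations yields $\Dbf\,\Ep[\Vbf_t\Vbf_t^\top]\,\Dbf=\Ep[\Vbf_t\Vbf_t^\top]$ for every such $\Dbf$. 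Choosing, for each off-diagonal pair $(i,j)$, a $\Dbf$ that flips only column $i$ forces $(\Ep[\Vbf_t\Vbf_t^\top])_{ij}=0$, so $\Ep[\Vbf_t\Vbf_t^\top]$ is diagonal.

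Next I would invoke the cyclic structure of Definition~\ref{def:cyclic}: letting $\Cbf$ be the $k\times k$ cyclic shift, each circulant block satisfies $\Sbf_l\Cbf=\Pbf\Sbf_l$ for a common row-permutation $\Pbf$, so $\Gbf\Cbf$ is a row permutation of $\Gbf$. Combined with the i.i.d.\ Gaussian weights and the i.i.d.-across-rows erasures (Assumption~\ref{ass:noise}), this gives $\Gbf_s^{(t)}\Cbf\overset{d}{=}\Gbf_s^{(t)}$, and the same projector-transformation argument yields $\Cbf^\top\Ep[\Vbf_t\Vbf_t^\top]\Cbf=\Ep[\Vbf_t\Vbf_t^\top]$; equivalently $\Ep[\Vbf_t\Vbf_t^\top]$ commutes with $\Cbf$ and is therefore circulant. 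A diagonal circulant matrix is necessarily $c\I_k$ for some scalar $c$. Finally, from $\trace(\Vbf_t\Vbf_t^\top)=\text{rank}(\Gbf_s^{(t)})$ (trace of any orthogonal projector equals its rank), taking traces of both sides of $\Ep[\Vbf_t\Vbf_t^\top]=c\I_k$ gives $ck=\Ep[\text{rank}(\Gbf_s^{(t)})]$, so $c=1-\delta_t$.

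The main obstacle will be making the cyclic-shift step rigorous: one must verify that the composition ``realize $\Gbf^{(t)}$, then erase rows, then right-shift columns by $\Cbf$'' has the same joint law as ``realize $\Gbf^{(t)}$, then erase rows'', which requires that the random row erasures are invariant in distribution under the induced row-permutation $\Pbf$ (this uses the i.i.d.\ erasure assumption and the fact that $\Pbf$ permutes rows within $\Sbf_1$ and within $\Sbf_2$ consistently). Once this equivalence is in place, the sign-flip step and the trace computation are routine linear algebra.
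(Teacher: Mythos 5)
Your proposal is correct and follows essentially the same route as the paper's proof: sign-flip symmetry of the Gaussian weights to kill the off-diagonal entries, cyclic-shift symmetry of the combined cyclic pattern (together with exchangeability of the row erasures) to equalize the diagonal entries, and the identity $\trace(\Vbf_t\Vbf_t^\top)=\text{rank}(\Gbf_s^{(t)})$ to pin down the constant. The paper handles the cyclic step by exhibiting a measure-preserving bijection between realizations of $\Gbf_s^{(t)}$ and of $\Gbf_s^{(t)}\Pbf_\pi$ (noting in a footnote that $\Sbf_1$ and $\Sbf_2$ should have distinct cyclic rows), which is the same subtlety you flag at the end; your phrasing via invariance of the erasure law under the induced row permutation is an equivalent way to close it.
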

\begin{proof}
See Appendix~\ref{sec:lmm_proof} for the full proof. The proof relies on proving some symmetric properties of $\Ep[\Vbf_t\Vbf_t^\top]$. We prove that the symmetry of standard Gaussian pdf on the real line ensures that all of the off-diagonal entries in $\Ep[\Vbf_t\Vbf_t^\top]$ are zero. Further, we prove that the ``combined cyclic'' structure of $\Gbf$ in Definition~\ref{def:cyclic} ensures that all diagonal entries on $\Ep[\Vbf_t\Vbf_t^\top]$ are identical. The two facts above show that $\Ep[\Vbf_t\Vbf_t^\top]=x\I_k$ for some constant $x$. Then, we can use a property of trace to compute $x$ and obtain \eqref{eqn:ds}.
\end{proof}
We denote the projection $\Vbf_t\Vbf_t^\top$ by $\Pbf_V$. Then, from \eqref{eqn:sum_vtilde_I}, $\wt\Vbf_t\wt\Vbf_t^\top=\I_k-\Pbf_V$. The first term $\vect{\Vbf_t\Vbf_t^\top\mat{\Bbf \e_t}}=\vect{\Pbf_V\mat{\Bbf \e_t}}$ in \eqref{eqn:err_iter} can be bounded by
\[
\begin{split}
&\Ep\left[\twonorm{\vect{\Pbf_V\mat{\Bbf \e_t}}}^2\right]
\overset{(a)}{=}\Ep\left[\twonorm{(\Pbf_V\otimes \I_b)\Bbf \e_t}^2\right]\\
\overset{(b)}{=}&\Ep\left[\trace\left((\Bbf \e_t)^\top(\Pbf_V\otimes \I_b)^\top (\Pbf_V\otimes \I_b)\Bbf \e_t \right)\right]\\
\end{split}
\]
\begin{equation}\label{eqn:der1}
\begin{split}
\overset{(c)}{=}&\Ep\left[\trace\left(\Bbf \e_t(\Bbf \e_t)^\top(\Pbf_V\otimes \I_b)^\top (\Pbf_V\otimes \I_b) \right)\right]\\
\overset{(d)}{=}&\trace\left(\Ep\left[\Bbf \e_t(\Bbf \e_t)^\top\right]\Ep\left[(\Pbf_V\otimes \I_b)^\top (\Pbf_V\otimes \I_b) \right]\right)\\
=&\trace\left(\Ep\left[\Bbf \e_t(\Bbf \e_t)^\top\right]\Ep\left[(\Pbf_V^\top \Pbf_V)\otimes \I_b \right]\right)\\
\overset{(e)}{=}&\trace\left(\Ep\left[\Bbf \e_t(\Bbf \e_t)^\top\right]\Ep\left[\Pbf_V\otimes \I_b \right]\right)\\
\overset{(f)}{=}&\trace\left(\Ep\left[\Bbf \e_t(\Bbf \e_t)^\top\right]((1-\delta_t)\I_k)\otimes \I_b \right)\\
=& (1-\delta_t)\trace\left(\Ep\left[\Bbf \e_t(\Bbf \e_t)^\top\right]\right)=(1-\delta_t)\Ep[\twonorm{\Bbf \e_t}^2],
\end{split}
\end{equation}
where ($a$) is from the property of mat-vec operations, ($b$) is because $(\Pbf_V\otimes \I_b)\Bbf \e_t$ is a vector, ($c$) is because $\trace(AB)=\trace(BA)$, ($d$) is because $\trace$ and $\Ep$ commutes and the projection $\Pbf_V$ only depends on the random partial generator matrix $\Gbf_s$ and is independent of $\e_t$, ($e$) is because $\Pbf_V$ is a projection matrix and ($f$) is from Lemma~\ref{lmm:exp_proj} and $\Pbf_V=\Vbf_t\Vbf_t^\top$. Similarly, we can prove
\begin{equation}\label{eqn:der2}
\Ep\left[\twonorm{\vect{(\I_k-\Pbf_V)\mat{\e_t}}}^2\right]=\delta_t\Ep[\twonorm{\e_t}^2].
\end{equation}
Therefore
\begin{equation}
\begin{split}
&\Ep[\twonorm{\e_{t+1}}^2]\\
\overset{(a)}{=}&\Ep[\twonorm{\vect{\Vbf_t\Vbf_t^\top\mat{\Bbf \e_t}}}^2]+\Ep[\twonorm{\vect{\wt\Vbf_t\wt\Vbf_t^\top\mat{\e_t}}}^2]\\
\overset{(b)}{=}&(1-\delta_t) \Ep[\twonorm{\Bbf\e_{t}}^2]+\delta_t\Ep[\twonorm{\e_{t}}^2],
\end{split}
\end{equation}
where ($a$) is from \eqref{eqn:err_iter} and the Pythagorean theorem, and ($b$) is from \eqref{eqn:der1} and \eqref{eqn:der2}. Thus, we have completed the proof.

\subsection{Proof of Theorem~\ref{thm:converge_col}}

Since $\Vbf_t$ and $\wt\Vbf_t$ are orthogonal to each other we have
\begin{equation}\label{eqn:error_analysis_2}
\begin{split}
  [\u^{1*},\ldots,\u^{k*}]=&[\u^{1*},\ldots,\u^{k*}]\Vbf_t\Vbf_t^\top+[\u^{1*},\ldots,\u^{k*}]\wt\Vbf_t\wt\Vbf_t^\top\\
  =&[\Bbf_1\x^{1*},\ldots,\Bbf_k\x^{k*}]\Vbf_t\Vbf_t^\top\\
  &+[\u^{1*},\ldots,\u^{k*}]\wt\Vbf_t\wt\Vbf_t^\top.
\end{split}
\end{equation}
Subtracting \eqref{eqn:error_analysis_2} from \eqref{eqn:substitute_decoding_vertical}, we have
\begin{equation}\label{eqn:linsys_1}
\begin{split}
[\e_t^1,\ldots, \e_t^k]=&[\Bbf_1\ep_t^1,\ldots,\Bbf_k\ep_t^k]\Vbf_t\Vbf_t^\top\\
&+[\e_{t-1}^1,\ldots,\e_{t-1}^k]\wt\Vbf_t\wt\Vbf_t^\top,
\end{split}
\end{equation}
where $\ep_t^j=\x_t^j-\x^{j*}$ is the $j$-th subvector of $\e_t=\x_t-\x^*$. The above equation and the following equation (from \eqref{eqn:e_sum}) describe two coupled linear systems of $\ep_t^j$ and $\e_t^j$:
\begin{equation}\label{eqn:linsys_2}
  \left[\begin{matrix}
    \ep_{t+1}^1\\
    \vdots\\
    \ep_{t+1}^k
  \end{matrix}\right]= \e_{t+1} =\sum_{j=1}^k \e_t^j.
\end{equation}
Now, we analyze the coupled linear systems \eqref{eqn:linsys_1} and \eqref{eqn:linsys_2}. First, we vectorize the matrices in \eqref{eqn:linsys_1}. This gives the following equation
\begin{equation}\label{eqn:error_analysis_3}
\left[\begin{matrix}
  \e_t^1\\
  \vdots\\
  \e_t^k
\end{matrix}\right]=[\Vbf_t\Vbf_t^\top\otimes\I_N]\left[\begin{matrix}
  \Bbf_1\ep_t^1\\
  \vdots\\
  \Bbf_k\ep_t^k
\end{matrix}\right]+[\wt\Vbf_t\wt\Vbf_t^\top\otimes\I_N]\left[\begin{matrix}
  \e_{t-1}^1\\
  \vdots\\
  \e_{t-1}^1
\end{matrix}\right].
\end{equation}
Define
\begin{equation}
  \Bbf_\text{expand}=\left[\begin{matrix}
  \Bbf_1&&\\
  &\ddots&\\
  &&\Bbf_k
\end{matrix}\right].
\end{equation}
Then, from \eqref{eqn:error_analysis_3}, we have
\begin{equation}
\begin{split}
  \left[\begin{matrix}
  \e_t^1\\
  \vdots\\
  \e_t^k
\end{matrix}\right]=&[\Vbf_t\Vbf_t^\top\otimes\I_N]\Bbf_\text{expand}\left[\begin{matrix}
  \ep_t^1\\
  \vdots\\
  \ep_t^k
\end{matrix}\right]+[\wt\Vbf_t\wt\Vbf_t^\top\otimes\I_N]\left[\begin{matrix}
  \e_{t-1}^1\\
  \vdots\\
  \e_{t-1}^1
\end{matrix}\right]\\
\overset{(a)}{=}&[\Vbf_t\Vbf_t^\top\otimes\I_N]\Bbf_\text{expand}\sum_{j=1}^k \e_{t-1}^j
+[\wt\Vbf_t\wt\Vbf_t^\top\otimes\I_N]\left[\begin{matrix}
  \e_{t-1}^1\\
  \vdots\\
  \e_{t-1}^1
\end{matrix}\right]\\
=&[\Vbf_t\Vbf_t^\top\otimes\I_N]\Bbf_\text{expand}[\I_N,\I_N,\ldots,\I_N]\left[\begin{matrix}
  \e_{t-1}^1\\
  \vdots\\
  \e_{t-1}^k
\end{matrix}\right]\\
&+[\wt\Vbf_t\wt\Vbf_t^\top\otimes\I_N]\left[\begin{matrix}
  \e_{t-1}^1\\
  \vdots\\
  \e_{t-1}^1
\end{matrix}\right],
\end{split}
\end{equation}
where step ($a$) is from \eqref{eqn:linsys_2}.

Recall that
\begin{equation}
E_t=\left[\begin{matrix}
  \e_t^1\\
  \vdots\\
  \e_t^k
\end{matrix}\right].
\end{equation}
Then, the above equation can be simplified to
\begin{equation}
E_t=[\Vbf_t\Vbf_t^\top\otimes\I_N]\Bbf_\text{expand}^{(k)}E_{t-1}+[\wt\Vbf_t\wt\Vbf_t^\top\otimes\I_N]E_{t-1},
\end{equation}
where $\Bbf_\text{expand}^{(k)}:=\Bbf_\text{expand}[\I_N,\I_N,\ldots,\I_N]$. Since $\Vbf_t\Vbf_t^\top\otimes\I_N$ and $\wt\Vbf_t\wt\Vbf_t^\top\otimes\I_N$ are two projections that are orthogonal to each other, we have
\begin{equation}\label{eqn:error_analysis_6}
\begin{split}
\Ep[\twonorm{E_t}^2]=&\Ep\left[\twonorm{[\Vbf_t\Vbf_t^\top\otimes\I_N]\Bbf_\text{expand}^{(k)}E_{t-1}}^2\right]\\
&+\Ep\left[\twonorm{[\wt\Vbf_t\wt\Vbf_t^\top\otimes\I_N]E_{t-1}}^2\right].
\end{split}
\end{equation}
Then, we can use the same derivation in \eqref{eqn:der1} to show that
\begin{equation}\label{eqn:error_analysis_4}
\Ep\left[\twonorm{[\Vbf_t\Vbf_t^\top\otimes\I_N]\Bbf_\text{expand}^{(k)}E_{t-1}}^2\right]=(1-\delta_t) \twonorm{\Bbf_\text{expand}^{(k)}E_{t-1}}^2,
\end{equation}
and
\begin{equation}\label{eqn:error_analysis_5}
\Ep\left[\twonorm{[\wt\Vbf_t\wt\Vbf_t^\top\otimes\I_N]E_{t-1}}^2\right]=\delta_t \twonorm{E_{t-1}}^2.
\end{equation}
Plugging \eqref{eqn:error_analysis_4} and \eqref{eqn:error_analysis_5} into \eqref{eqn:error_analysis_6}, we obtain
\begin{equation}
\begin{split}
  \Ep[\twonorm{E_t}^2]=&(1-\delta_t) \twonorm{\Bbf_\text{expand}^{(k)}E_{t-1}}^2+\delta_t \twonorm{E_{t-1}}^2\\
\le &(1-\delta_t) \twonorm{\Bbf_\text{expand}^{(k)}}_2^2\twonorm{E_{t-1}}^2+\delta_t \twonorm{E_{t-1}}^2.
\end{split}
\end{equation}
Thus, to prove \eqref{eqn:converge_rate_col}, we only need to prove $\twonorm{\Bbf_\text{expand}^{(k)}}_2=\twonorm{\Bbf}_\text{col}$, where $\twonorm{\Bbf}_\text{col}$ is defined by $\twonorm{\Bbf}_\text{col}=\sqrt{k}\max_{j}\{\twonorm{\Bbf_j}_2\}$. To prove this, we notice that
\begin{equation}
\begin{split}
  \twonorm{\Bbf_\text{expand}^{(k)}}_2=&\twonorm{\Bbf_\text{expand}[\I_N,\I_N,\ldots,\I_N]}_2\\
  =&\sqrt{k}\twonorm{\Bbf_\text{expand}}_2\\
  =&\sqrt{k}\twonorm{\left[\begin{matrix}
  \Bbf_1&&\\
  &\ddots&\\
  &&\Bbf_k
\end{matrix}\right]}_2\\
=&\sqrt{k} \max\{\twonorm{\Bbf_j}_2\}=\twonorm{\Bbf}_\text{col}.
\end{split}
\end{equation}
Therefore, we have completed the proof.

\bibliographystyle{IEEEtran}
\bibliography{rough}

\appendices

\section{Proof of Lemma~\ref{lmm:exp_proj}}\label{sec:lmm_proof}
\begin{proof}
First, notice that although the SVD decomposition of $\Gbf_{s}^{(t)}$ in \eqref{eqn:G_svd} is not unique, the projection matrix $\Vbf_t\Vbf_t^\top$ is unique for a certain $\Gbf_{s}^{(t)}$, because it is the projection onto the row space of $\Gbf_{s}^{(t)}$.  Then, before we prove the lemma, we show another lemma that will be useful.
\begin{lemma}\label{lmm:PP}
Suppose $\Pbf$ is a $k\times k$ orthonormal matrix. Then, if the corresponding projection matrix of $\Gbf_{s}^{(t)}$ is $\Vbf_t\Vbf_t^\top$, the corresponding projection matrix of $\Gbf_{s}^{(t)}\Pbf$ is $\Pbf^\top\Vbf_t\Vbf_t^\top\Pbf$.
\end{lemma}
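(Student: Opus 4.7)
The plan is to use the SVD directly and exploit the fact that right-multiplication by an orthonormal matrix preserves orthonormality of the right singular vectors. Begin by writing $\Gbf_{s}^{(t)} = \Ubf_t \Dbf_t \Vbf_t^\top$ as in \eqref{eqn:G_svd}, so that the columns of $\Vbf_t$ form an orthonormal basis for the row space of $\Gbf_s^{(t)}$ and $\Vbf_t \Vbf_t^\top$ is the orthogonal projection onto that row space (this is precisely the uniqueness remark already made at the start of the appendix).

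Next, I would form the factorization $\Gbf_{s}^{(t)}\Pbf = \Ubf_t \Dbf_t \Vbf_t^\top \Pbf = \Ubf_t \Dbf_t (\Pbf^\top \Vbf_t)^\top$ and check that $\Pbf^\top \Vbf_t$ has orthonormal columns by the one-line computation $(\Pbf^\top\Vbf_t)^\top(\Pbf^\top \Vbf_t) = \Vbf_t^\top \Pbf \Pbf^\top \Vbf_t = \Vbf_t^\top \Vbf_t = \I$, using that $\Pbf$ is orthonormal (so $\Pbf\Pbf^\top = \I$) and that $\Vbf_t$ already has orthonormal columns. This exhibits $\Ubf_t \Dbf_t (\Pbf^\top \Vbf_t)^\top$ as a valid SVD-style factorization of $\Gbf_{s}^{(t)}\Pbf$ whose right factor has columns spanning the row space of $\Gbf_{s}^{(t)}\Pbf$.

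Finally, invoking the uniqueness of the row-space projection, the projection onto the row space of $\Gbf_s^{(t)}\Pbf$ equals $(\Pbf^\top \Vbf_t)(\Pbf^\top \Vbf_t)^\top = \Pbf^\top \Vbf_t \Vbf_t^\top \Pbf$, which is the desired identity. There is essentially no obstacle here: the argument reduces to $\Pbf\Pbf^\top = \I$ together with the observation that the row space of $\Gbf_s^{(t)}\Pbf$ is the image of the row space of $\Gbf_s^{(t)}$ under $\Pbf^\top$. The only minor care needed is that $\Ubf_t \Dbf_t (\Pbf^\top \Vbf_t)^\top$ need not be in canonical SVD form (with singular values sorted in decreasing order), but this is irrelevant because only the column span of the right factor matters for computing the projection.
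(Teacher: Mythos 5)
Your proposal is correct and follows essentially the same route as the paper: both write $\Gbf_s^{(t)}\Pbf = \Ubf_t\Dbf_t(\Vbf_t^\top\Pbf)$ as a valid SVD-style factorization and read off the projection $\Pbf^\top\Vbf_t\Vbf_t^\top\Pbf$, relying on the uniqueness of the row-space projection already noted at the start of the appendix. Your extra check that $\Pbf^\top\Vbf_t$ has orthonormal columns is a detail the paper leaves implicit.
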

\begin{proof}
If $\Gbf_{s}^{(t)}$ has the SVD decomposition $\Gbf_{s}^{(t)}=\Ubf_t\Dbf_t\Vbf_t^\top$, then $\Gbf_{s}^{(t)}\Pbf=\Ubf_t\Dbf_t(\Vbf_t^\top\Pbf)$ is a valid SVD of $\Gbf_{s}^{(t)}\Pbf$, which means the projection matrix now should be $(\Vbf_t^\top\Pbf)^\top \Vbf_t^\top\Pbf=\Pbf^\top\Vbf_t\Vbf_t^\top\Pbf$.
\end{proof}

Now, we prove two properties of the expected projection matrix.

\subsection{Property 1: all off-diagonal entries in $\Ep[\Vbf_t\Vbf_t^\top]$ are zero} Suppose $\Pbf_i$ of size $k\times k$ is the diagonal matrix where all diagonal entries are 1 except the $i$-th diagonal entry is $-1$. When we multiply $\Gbf_{s}^{(t)}\Pbf_i$, we effectively flips the $i$-th column of $\Gbf_{s}^{(t)}$. Now, we observe the fact that the distribution of $\Gbf_{s}^{(t)}\Pbf_i$ is exactly the same as the distribution of $\Gbf_{s}^{(t)}$, because all non-zeros in $\Gbf$ have unit Gaussian distribution and the positive and negative part of this distribution is symmetric. Therefore, the projection $\Vbf_t\Vbf_t^\top$, which is a deterministic function of $\Gbf_{s}^{(t)}$, has the same distribution as the projection $\Pbf_i^\top\Vbf_t\Vbf_t^\top\Pbf_i$, which means
\begin{equation}\label{eqn:similar_projection}
\Ep[\Vbf_t\Vbf_t^\top]=\Ep[\Pbf_i^\top\Vbf_t\Vbf_t^\top\Pbf_i]=\Pbf_i^\top\Ep[\Vbf_t\Vbf_t^\top]\Pbf_i.
\end{equation}
By the structure of $\Pbf_i$, this means that after flipping the $i$-th column and the $i$-th row of $\Ep[\Vbf_t\Vbf_t^\top]$, the matrix stays the same. The only way that this can be true is that all entries on the $i$-th row and the $i$-th column are zero, except the diagonal entry.

\subsection{Property 2: all diagonal entries in $\Ep[\Vbf_t\Vbf_t^\top]$ are equal} Suppose $\Pbf_\pi$ is the $k\times k$ permutation matrix
\begin{equation}
\Pbf_\pi=\left[
\begin{matrix}
0 & 1 & 0 & \cdots & 0\\
0 & 0 & 1 & \cdots & 0\\
\vdots & \vdots & \vdots & \ddots & \vdots\\
0 & 0 & 0 & \cdots & 1\\
1 & 0 & 0 & \cdots & 0\\
\end{matrix}\right].
\end{equation}
It is an orthonormal matrix. When we multiply $\Gbf_{s}^{(t)}\Pbf_\pi$, we effectively do a cyclic shift on the columns of $\Gbf_{s}^{(t)}$ by pushing each column to its left (except for the left most column which is pushed to the right-most). Then, we show that the distribution of $\Gbf_{s}^{(t)}$ is again exactly the same as the distribution of $\Gbf_{s}^{(t)}\Pbf_\pi$. To prove this, we look at a specific random initialization of $\Gbf_{s}^{(t)}\Pbf_\pi$. Suppose $\Gbf_{s}^{(t)}$ occupies $a$ rows in $\Sbf_1$ and $b$ rows in $\Sbf_2$, where recall that $\Sbf_1$ and $\Sbf_2$ are square cyclic matrices (see Definition~\ref{def:cyclic}). Suppose the $a$ rows are the $i_1, i_2,\ldots,i_a$-th rows in $\Sbf_1$ and the $b$ rows are the $j_1, j_2,\ldots,j_b$-th rows in $\Sbf_2$. Then, after permuting the columns of $\Gbf_{s}^{(t)}$ by $\Pbf_\pi$, the sparsity pattern of $\Gbf_{s}^{(t)}\Pbf_\pi$ is the same as if the $(i_1-1, i_2-1,\ldots,i_a-1)(\text{mod } k)$-th rows in $\Sbf_1$ and the $(j_1-1, j_2-1,\ldots,j_b-1)(\text{mod } k)$-th rows in $\Sbf_2$ are chosen for a realization of $\Gbf_{s}^{(t)}$. This ensures that there exists a realization of $\Gbf_{s}^{(t)}$ which is exactly the same as the examined realization of $\Gbf_{s}^{(t)}\Pbf_\pi$. These two realizations have a one-to-one mapping because no other realizations of $\Gbf_{s}^{(t)}\Pbf_\pi$ will lead to the same realization\footnote{There is a subtle point here that when the cyclic rows of $\Sbf_1$ and $\Sbf_2$ are the same, the mapping is not one-to-one anymore. Therefore, $\Sbf_1$ and $\Sbf_2$ can be chosen to have different cyclic rows.} of $\Gbf_{s}^{(t)}$ and the pdfs of these two realizations are also the same, because all rows are selected uniformly and the all entries have the same unit Gaussian distribution. Based on this one-to-one mapping, we know that the distribution of $\Gbf_{s}^{(t)}$ and $\Gbf_{s}^{(t)}\Pbf_\pi$ are the same. Therefore, similar to \eqref{eqn:similar_projection}, we obtain
\begin{equation}\label{eqn:similar_projection_2}
\Ep[\Vbf_t\Vbf_t^\top]=\Pbf_\pi^\top\Ep[\Vbf_t\Vbf_t^\top]\Pbf_\pi.
\end{equation}
From Property 1 we have proved that all the off-diagonal entries in $\Ep[\Vbf_t\Vbf_t^\top]$ are 0. From \eqref{eqn:similar_projection_2}, we have that the cyclic shift on the diagonal also remains the same. This means all diagonal entries in $\Ep[\Vbf_t\Vbf_t^\top]$ are identical.

\subsection{Prove $\Ep[\Vbf_t\Vbf_t^\top]=(1-\delta_t)\I_k$} From Property 1 and Property 2, we can assume that $\Ep[\Vbf_t\Vbf_t^\top]=x\I_k$ for some constant $x$. Notice that on one hand,
\begin{equation}
\begin{split}
&\trace[\Ep[\Vbf_t\Vbf_t^\top]]=\Ep[\trace[\Vbf_t\Vbf_t^\top]]\\
=&\Ep[\trace[\Vbf_t^\top\Vbf_t]]=\Ep[\text{rank}(\Gbf_{s}^{(t)})].
\end{split}
\end{equation}
One the other hand,
\begin{equation}
\trace[\Ep[\Vbf_t\Vbf_t^\top]]=\trace[x\I_k]=xk,
\end{equation}
so we have $x=\frac{\Ep[\text{rank}(\Gbf_{s}^{(t)})]}{k}=1-\delta_t$.
\end{proof}

\section{Equivalence of Several Norms}
\subsection{For $G(N,p)$ the induced two-norm $\twonorm{\Abf}$ is close to $\rho(\Abf)$}\label{sec:norm_close}

We prove a lemma stating that for $G(N,p)$, the induced two-norm $\twonorm{\Abf}$ is close to $\rho(\Abf)$.
\begin{lemma}\label{lmm:norm_close}
For random graph from the Erd\"os-R\'enyi model $G(N,p)$, the column-normalized adjacency matrix $\Abf$ satisfies
\begin{equation}
\text{Pr}\left(\twonorm{\Abf}>\sqrt{\frac{1+\epsilon}{1-\epsilon}}\rho(\Abf)\right)<3Ne^{-\epsilon^2Np/8}.
\end{equation}
\end{lemma}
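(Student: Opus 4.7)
The plan is to exploit the column-stochastic structure of $\Abf$ rather than to invoke any sharp random-matrix estimate. Since $\Abf$ is obtained by normalizing the columns of the nonnegative symmetric adjacency matrix, every column of $\Abf$ sums to one, so $\one^\top \Abf = \one^\top$ and $\rho(\Abf) = 1$ by Perron--Frobenius. The claim therefore reduces to showing that $\twonorm{\Abf} \le \sqrt{(1+\epsilon)/(1-\epsilon)}$ with probability at least $1 - 3Ne^{-\epsilon^2 Np/8}$, which is a statement purely about the induced two-norm of $\Abf$.

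The next step is to bypass any spectral analysis of $\Abf$ by using the classical interpolation inequality
\[
\twonorm{\Abf}_2 \le \sqrt{\twonorm{\Abf}_1 \,\twonorm{\Abf}_\infty}.
\]
Writing $\Abf = \Abf_0 \Dbf^{-1}$ with $\Abf_0$ the symmetric $\{0,1\}$ adjacency matrix and $\Dbf = \diag(d_1,\ldots,d_N)$ the degree matrix, the column-stochastic property gives $\twonorm{\Abf}_1 = 1$ immediately. For the $\infty$-norm, I would note that the $i$-th row sum equals $\sum_j A_{0,ij}/d_j \le (\sum_j A_{0,ij})/\min_k d_k = d_i/\min_k d_k$, where the last equality uses the symmetry of $\Abf_0$. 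Taking the maximum over $i$ yields the clean bound $\twonorm{\Abf}_2^2 \le \max_i d_i /\min_k d_k$.

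The remainder is degree concentration in $G(N,p)$. Each $d_j$ is a Binomial sum with mean $\approx Np$, so a multiplicative Chernoff bound gives $\Pr(d_j \le (1-\epsilon)Np)$ and $\Pr(d_j \ge (1+\epsilon)Np)$ each of order $e^{-\epsilon^2 Np/c}$; using a convenient form whose constant absorbs into $1/8$, and taking a union bound over the $N$ vertices for both deviation directions, produces the probability $3Ne^{-\epsilon^2 Np/8}$ stated in the lemma (with the constant $3$ covering the slack in the Chernoff constants). On the complementary event, $\max_i d_i/\min_k d_k \le (1+\epsilon)/(1-\epsilon)$, so $\twonorm{\Abf} \le \sqrt{(1+\epsilon)/(1-\epsilon)} = \sqrt{(1+\epsilon)/(1-\epsilon)}\,\rho(\Abf)$, giving the claim.

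I do not expect any conceptual obstacle: the two structural ingredients (identifying $\rho(\Abf)=1$ and the $\twonorm{\cdot}_1$--$\twonorm{\cdot}_\infty$ interpolation) are routine, and the random-graph input is just scalar Chernoff concentration. The only care point is a bookkeeping one, namely selecting a form of Chernoff's inequality whose constants fit cleanly under the target bound $3Ne^{-\epsilon^2 Np/8}$; once that is pinned down, the proof is entirely elementary and avoids any appeal to Wigner-type spectral bounds on $\Abf_0$.
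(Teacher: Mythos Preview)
Your proposal is correct and follows essentially the same route as the paper: identify $\rho(\Abf)=1$ from column-stochasticity, apply the interpolation bound $\twonorm{\Abf}_2\le\sqrt{\twonorm{\Abf}_1\twonorm{\Abf}_\infty}$ with $\twonorm{\Abf}_1=1$, and control $\twonorm{\Abf}_\infty$ via degree concentration in $G(N,p)$ together with a union bound over vertices. The paper cites a specific binomial tail bound (yielding the constant $3$ and exponent $\epsilon^2 Np/8$) and bounds each entry of $\Abf$ by $1/\min_k d_k$ times the number of nonzeros per row, which is exactly your $\max_i d_i/\min_k d_k$ estimate written slightly differently.
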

\begin{proof}
It is well known that the node degrees of Erd\"os-R\'enyi graphs concentrate at $Np$. For example, Theorem 4.1 in the online book chapter here \url{https://www.cs.cmu.edu/~avrim/598/chap4only.pdf} has the following theorem.
\begin{theorem}\label{thm:rand_graph}
Let $v$ be a vertex of the random graph $G(N,p)$. For $0<\alpha<\sqrt{Np}$
\begin{equation}
\text{Pr}(|Np-\text{deg}(v)|\ge \alpha\sqrt{Np})\le 3e^{-\alpha^2/8}.
\end{equation}
\end{theorem}
Therefore, by the union bound, with probability at least $1-3Ne^{-\epsilon^2Np/8}$, all nodes in the graph have degree within the range $(Np(1-\epsilon),Np(1+\epsilon))$. The matrix $\Abf$ is the column-normalized adjacency matrix, so its spectral radius $\rho(\Abf)=1$. By H\"older's inequality
\begin{equation}\label{eqn:2norm}
\twonorm{\Abf}_2\le \sqrt{\twonorm{\Abf}_1\twonorm{\Abf}_\infty}.
\end{equation}
The induced 1-norm $\twonorm{\Abf}_1$ is also the maximum absolute column sum and the induced infinity norm $\twonorm{\Abf}_\infty$ is also the maximum absolute row sum. Since $\Abf$ is the column-normalized, the induced 1-norm satisfies
\begin{equation}\label{eqn:1norm}
\twonorm{\Abf}_1=1.
\end{equation}
Since the degree of the graph is within range $(Np(1-\epsilon),Np(1+\epsilon))$ with high probability, the minimum normalization factor for a node $v$ when doing column normalization, i.e., the minimum column sum of the un-normalized adjacency matrix, is greater than $Np(1-\epsilon)$ with high probability. This means that with high probability, all non-zeros in $\Abf$ are smaller than $\frac{1}{Np(1-\epsilon)}$. Moreover, the maximum number of non-zeros of $\Abf$ in each row is also smaller than $Np(1+\epsilon)$ with high probability. This means that the maximum row sum $\twonorm{\Abf}_\infty$ satisfies the following with high probability
\begin{equation}\label{eqn:infnorm}
\twonorm{\Abf}_\infty\le \frac{Np(1+\epsilon)}{Np(1-\epsilon)}=\frac{1+\epsilon}{1-\epsilon}.
\end{equation}
From \eqref{eqn:2norm} to \eqref{eqn:infnorm}, with high probability (at least $1-3Ne^{-\epsilon^2Np/8}$),
\begin{equation}
\twonorm{\Abf}_2\le \sqrt{\frac{1+\epsilon}{1-\epsilon}}=\sqrt{\frac{1+\epsilon}{1-\epsilon}}\rho(\Abf).
\end{equation}
This completes the proof.
\end{proof}

\subsection{For $G(N,p)$ $\twonorm{\Abf}_\text{col}$ is close to $\rho(\Abf)$}
\begin{lemma}\label{lmm:norm_close_col}
For random graph from the Erd\"os-R\'enyi model $G(N,p)$, the column-normalized adjacency matrix $\Abf$ satisfies
\begin{equation}
\begin{split}
  &\text{Pr}\left(\twonorm{\Abf}_\text{col}>\sqrt{\left(1+\frac{k}{N}\right)\frac{1+\epsilon}{1-\epsilon}}\rho(\Abf)\right)\\
  <&3Ne^{-\epsilon^2Np/8}+3kNe^{-\epsilon^2Np/(8k)}.
\end{split}
\end{equation}
\end{lemma}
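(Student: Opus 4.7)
The plan is to mirror the structure of Lemma~\ref{lmm:norm_close}, but applied block-wise. Since $\twonorm{\Abf}_\text{col}=\sqrt{k}\max_{j}\twonorm{\Abf_j}_2$, it suffices to uniformly control $\twonorm{\Abf_j}_2$ over the $k$ column blocks and take the maximum. As in Lemma~\ref{lmm:norm_close}, I would bound each $\twonorm{\Abf_j}_2$ via H\"older's inequality $\twonorm{\Abf_j}_2\le \sqrt{\twonorm{\Abf_j}_1\twonorm{\Abf_j}_\infty}$, then control the two factors on the right using concentration of column degrees and sub-degrees of the Erd\"os--R\'enyi graph.

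The first step is to set up two high-probability events. The first event is that every column degree $d_l$ lies in $(Np(1-\epsilon),Np(1+\epsilon))$ for all $l\in\{1,\ldots,N\}$. This is obtained by applying Theorem~\ref{thm:rand_graph} with $\alpha=\epsilon\sqrt{Np}$ and a union bound over the $N$ vertices, contributing the $3Ne^{-\epsilon^2Np/8}$ term exactly as in Lemma~\ref{lmm:norm_close}. The second event is that, for every vertex $i$ and every block index $j$, the sub-degree $d_i(S_j)$ (the number of neighbors of $i$ whose column indices fall in the set $S_j$ of size roughly $N/k$ indexing block $j$) concentrates around $(N/k)p$ within a factor $1\pm\epsilon$. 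Since $d_i(S_j)\sim\text{Binomial}(|S_j|,p)$ with $|S_j|$ of order $N/k$, Theorem~\ref{thm:rand_graph} with $N$ replaced by $N/k$ and $\alpha=\epsilon\sqrt{Np/k}$ gives per-pair failure probability at most $3e^{-\epsilon^2Np/(8k)}$. A union bound over the $kN$ (node, block) pairs yields the second term $3kNe^{-\epsilon^2Np/(8k)}$.

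Conditional on both events, each column of $\Abf_j$ is still a column of the column-normalized $\Abf$, hence sums to $1$, so $\twonorm{\Abf_j}_1=1$. Row $i$ of $\Abf_j$ equals $\sum_{l\in S_j,\,l\sim i}1/d_l$, which is at most $d_i(S_j)/\min_{l}d_l$. Applying the upper bound on the sub-degrees and the lower bound on the column degrees from the two events, this row sum is at most $\frac{(1+k/N)(1+\epsilon)}{k(1-\epsilon)}$, where the small $(1+k/N)$ slack absorbs boundary effects in the sub-degree count (e.g.\ the block containing $v$ itself has only $N/k-1$ other columns, and when $N/k$ is not integer one must work with $\lceil N/k\rceil$ columns). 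Combining $\twonorm{\Abf_j}_1\le 1$ and $\twonorm{\Abf_j}_\infty\le \frac{(1+k/N)(1+\epsilon)}{k(1-\epsilon)}$ via H\"older, then multiplying by $\sqrt{k}$ and using $\rho(\Abf)=1$, yields the stated bound on $\twonorm{\Abf}_\text{col}$.

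The main obstacle is the sub-degree step: as $k$ grows, $(N/k)p$ shrinks, so the Chernoff-type relative concentration of $d_i(S_j)$ becomes weaker, which is precisely why the exponent in the second failure term degrades by the factor $1/k$. One must also verify that Theorem~\ref{thm:rand_graph}'s hypothesis $\alpha<\sqrt{Np/k}$ is satisfied, which translates to $\epsilon<1$. All remaining steps are routine block-wise extensions of the corresponding lines in the proof of Lemma~\ref{lmm:norm_close}.
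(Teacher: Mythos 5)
Your proposal follows essentially the same route as the paper's proof: two concentration events (full column degrees via Theorem~\ref{thm:rand_graph} with a union bound over $N$ vertices, and per-block sub-degrees via the same theorem applied to the roughly $N/k$-vertex induced neighborhoods with a union bound over $kN$ pairs), then H\"older's inequality on each block with $\twonorm{\Abf_j}_1=1$ and the row-sum bound $\left(\frac{1}{k}+\frac{1}{N}\right)\frac{1+\epsilon}{1-\epsilon}$, followed by multiplying by $\sqrt{k}$ and using $\rho(\Abf)=1$. The only cosmetic difference is that the paper derives the $(1+k/N)$ factor explicitly from the $(N/k+1)$-vertex induced subgraph containing node $i$, which is exactly the ``boundary effect'' you absorb into that slack; the argument is correct.
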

\begin{proof}
Notice that $\twonorm{\Abf}_\text{col}$ is defined as $\twonorm{\Abf}_\text{col}=\sqrt{k} \max_j\{\twonorm{\Abf_j}_2\}$, where $\Abf_1,\Abf_2\ldots,\Abf_k$ are the column blocks of the column-normalized adjacency matrix $\Abf$. For each $\Abf_j$, we will prove that with high probability, $\twonorm{\Abf_j}\approx \frac{1}{\sqrt{k}}\rho(\Abf)$. Then, using the union bound, we can prove that with high probability, $\sqrt{k} \max_j\{\twonorm{\Abf_j}_2\}\approx \rho(\Abf)$.

We divide the nodes of the graph into $k$ subsets according to the column-splitting $\Abf=[\Abf_1,\Abf_2,\ldots,\Abf_k]$. For an arbitrary $\Abf_j$, suppose its $i$-th row has $\text{deg}_{ij}$ non-zeros. Notice that $\text{deg}_{ij}$ is the degree of the $i$-th node in the subgraph induced by the $i$-th node and all nodes in the $j$-th subset of all nodes. Then, this subgraph is an Erd\"os-R\'enyi graph with random connection probability $p$ and node number $N/k+1$. Therefore, from Theorem~\ref{thm:rand_graph} and the union bound, with probability at least $1-3kNe^{-\epsilon^2(N/k+1)p/8}\ge 1-3kNe^{-\epsilon^2Np/(8k)}$, for all node $i$ and all subgraph $j$ the degree $\text{deg}_{ij}$ is in the range $((N/k+1)p(1-\epsilon),(N/k+1)p(1+\epsilon))$. Also, similar to the proof in Section~\ref{sec:norm_close}, with probability at least $1-3Ne^{-\epsilon^2Np/8}$, all nodes in the original graph have degree within the range $(Np(1-\epsilon),Np(1+\epsilon))$. Thus, with probability at least $1-3kNe^{-\epsilon^2Np/(8k)}-3Ne^{-\epsilon^2Np/8}$, the maximum row-sum of all $\Abf_j$ satisfies
\begin{equation}
\max_j \{\twonorm{\Abf_j}_\infty\}\le \frac{(N/k+1)p(1+\epsilon)}{Np(1-\epsilon)}=\left(\frac{1}{k}+\frac{1}{N}\right)\frac{1+\epsilon}{1-\epsilon}.
\end{equation}
By H\"older's inequality
\begin{equation}
\twonorm{\Abf_j}_2\le \sqrt{\twonorm{\Abf_j}_1\twonorm{\Abf_j}_\infty}.
\end{equation}
Since all $\Abf_j$ are column-normalized, the column sum $\twonorm{\Abf_j}_1=1,\forall j$. Thus, with probability at least $1-3kNe^{-\epsilon^2Np/(8k)}-3Ne^{-\epsilon^2Np/8}$,
\begin{equation}
\max_j \{\twonorm{\Abf_j}_2\} \le \sqrt{\left(\frac{1}{k}+\frac{1}{N}\right)\frac{1+\epsilon}{1-\epsilon}}.
\end{equation}
Since $\rho(\Abf)=1$, with high probability,
\begin{equation}
\twonorm{\Abf}_\text{col}=\sqrt{k} \max_j\{\twonorm{\Abf_j}_2\}\le \sqrt{\left(1+\frac{k}{N}\right)\frac{1+\epsilon}{1-\epsilon}}\rho(\Abf).
\end{equation}
\end{proof}
\end{document}